%% file: main.tex
\documentclass[aps,prx,superscriptaddress,twocolumn]{revtex4-2}

\usepackage{mathtools}
\usepackage{amssymb}
\usepackage[dvipsnames,x11names,table]{xcolor}
\usepackage{tcolorbox}
\usepackage{tikzit}
\usepackage{graphicx}
\usepackage{caption}
\usepackage{amsthm}
\usepackage{thmtools}
\usepackage{hyperref}
\usetikzlibrary{decorations.pathreplacing}
\input{tikzinfo}

\hypersetup{colorlinks=true, linkcolor=BrickRed, citecolor=BrickRed, urlcolor=BrickRed}

\newcommand{\ket}[1]{\ensuremath{|#1\rangle}}
\newcommand{\bra}[1]{\ensuremath{\langle #1|}}

\newcommand{\ketbra}[2]{\ensuremath{|#1\rangle\!\langle #2|}}
\newcommand{\dk}[1]{\ensuremath{|#1\rangle\!\rangle}}
\newcommand{\db}[1]{\ensuremath{\langle\!\langle #1|}}
\newcommand{\Ptil}{\ensuremath{\widetilde{\mathcal P}}}
\newcommand{\Ctil}{\ensuremath{\widetilde{\mathcal C}}}

\DeclareMathOperator{\tr}{\text{tr}}
\DeclareMathOperator{\var}{Var}

\definecolor{boxborder}{RGB}{148,209,190}

\newtcolorbox[auto counter]{resultbox}[2][]{%
    colback=white, colframe=boxborder, coltitle=black,
    before skip=\baselineskip, after skip=\baselineskip,
    title={Result~\thetcbcounter: #2}, label={#1}}

\newtheorem{theorem}{Theorem}

\newcommand{\Var}{\var}
\newcommand{\vv}[1]{\mathbf{#1}}

\begin{document}

\title{Stacking the Deck:\\Tunable Trainability in Stacked LCUs}

\author{Nikhil Khatri}
\author{Stefan Zohren}
\affiliation{Machine Learning Research Group, University of Oxford, United Kingdom}
\author{Gabriel Matos}
\affiliation{Quantinuum, London, United Kingdom}

\date{27 July 2026}

\begin{abstract}
Variational quantum circuits have been central to many proposed near-term applications of quantum computing, but a growing body of evidence suggests that trainability and quantum advantage are fundamentally at odds: ans{\"a}tze expressive enough to resist efficient classical simulation tend to exhibit barren plateaus, while structures that provably rule out barren plateaus typically render them classically simulable. We propose a stacked linear combination of unitaries (S-LCU) as a variational ansatz which provides a tunable trade-off between barren plateaus and classical simulability.
Using a diagrammatic analysis, we bound the loss-landscape variance of the \emph{Free Fermion S-LCU}, whose elements are fermionic Gaussian unitaries.
We prove a variance lower bound of $\Omega(1/(n k^{3l}))$, with a simulation cost of $O(k^{2l} n^3)$ using the best known classical algorithm, compared to a quantum gate complexity of only $O(lkn^2)$. The number of layers $l$ serves as a single dial that trades computational complexity against the rate of cost concentration. This offers practitioners a systematic method for constructing ans{\"a}tze with a complexity-trainability trade-off that best suits their application and hardware.
\end{abstract}

\maketitle

\section{Introduction}
Variational quantum algorithms optimise a parametrised quantum circuit against a cost function, and are an important tool in many proposed near-term applications of quantum computing, from quantum machine learning~\cite{benedetti2019parameterized,biamonte2017qml} to ground state preparation~\cite{peruzzo2014variational,mcclean2016theory,tilly2022vqe} and combinatorial optimisation~\cite{farhi2014qaoa,blekos2024qaoa,abbas2024optimization}. Their utility as quantum routines relies on a balance between two demands: a parameterised circuit must be efficiently trainable, yet be difficult to simulate classically. These goals are increasingly understood to be in tension. Models with sufficiently high expressivity tend to exhibit \emph{barren plateaus}, where the variance of the gradient vanishes exponentially with system size, rendering training from a random parameter initialisation infeasible~\cite{mcclean2018barren,cerezo2021cost,Larocca2025}. Conversely, the structure that provably rules out barren plateaus typically also enables efficient classical simulation~\cite{cerezo2023simulability}.

One route to bridging this divide builds on the linear combination of unitaries (LCU) primitive~\cite{childs2012}. A number of variational ans\"atze incorporate LCUs, optimising the underlying parameterised unitaries, the LCU coefficients, or both~\cite{khatri2024quixer,heredge_nonunitary_qml,coopmans2024sample,yao2025lcqnn,coyle2025density}. Recent work established that an LCU of barren plateau-free circuits is also free of barren plateaus, while increasing expressivity and potentially increasing classical simulation cost~\cite{khatri2025trainability}. While these results are general, a concrete example considered is that of fermionic Gaussian unitaries and free fermion simulation. In that case, the difference between the simulation cost using the best known classical method and the quantum gate count is subquadratic in the model parameters~\cite{dias2024classical}. This leaves open the question of whether there is a \emph{general} variational ansatz construction that, starting from unitaries that are easy to simulate classically, leads to a larger classical-quantum time-complexity difference while keeping the loss landscape from flattening exponentially.

To tackle this question, we introduce a \emph{stacked LCU} variational ansatz (S-LCU): $l$ sequentially composed LCUs, each a weighted sum of $k$ unitaries on $n$ qubits. Expanding the product over its $l$ layers, the S-LCU is a linear combination of $k^l$ unitaries, implemented with only $O(lkn^2)$ gates, before postselection overhead.
We denote the specialisation of the constituent unitaries to fermionic Gaussian unitaries by \emph{Free Fermion S-LCU} (FF-S-LCU), an ansatz whose explicit decomposition is a linear combination of $k^l$ fermionic Gaussian unitaries that is implementable using a polynomial number of quantum gates. This count upper-bounds the fermionic Gaussian rank of the system and sets the cost of direct classical simulation~\cite{dias2024classical}. Because it grows exponentially with the number of layers $l$, the model becomes increasingly costly for this classical method while remaining efficient to prepare with a polynomial-sized quantum circuit.

Our analysis rests on a diagrammatic technique for computing the first and second moments of the S-LCU expectation value, requiring only that the unitaries be drawn from a continuous finite-dimensional unitary representation of a compact group $\mathcal G$. Its central object is the Gram matrix of the commutant of each S-LCU layer. This provides a systematic means of constructing ans{\"a}tze tailored to the complexity and trainability requirements of a particular application and hardware platform. Specialising to the FF-S-LCU, we derive a variance bound of $\Omega(1/(n\,k^{3l}))$ for traceless quadratic observables. For a fixed number of layers, this rules out exponential cost concentration in $n$ and $k$. More generally, Appendix~\ref{app:general_observables} gives the variance for an arbitrary initial state and observable. The number of layers $l$ is therefore a tunable knob, yielding an arbitrarily high polynomial difference between the direct classical simulation cost and the quantum gate count, at a corresponding cost in trainability.

We begin by describing the S-LCU construction in Section~\ref{sec:model} and the notation we use. We then analyse the model’s loss landscape by characterising the first and second moments (Sections~\ref{subsec:first_moment} and~\ref{subsec:second_moment}) of the cost function when the unitaries are drawn from a representation of a compact group. In Section~\ref{sec:fermionic}, we specialise to the FF-S-LCU, analyse its trainability and classical simulation complexity, and provide asymptotic bounds on cost concentration. We close by discussing algorithmic precedents for sequential LCUs and staged strategies for training the S-LCU.

\section{S-LCU Description}\label{sec:model}

The linear combination of unitaries (LCU)~\cite{childs2012} is a construction used to prepare a weighted sum of unitary operations,
\begin{align}
    L=\sum_{i=1}^k a_i U_i,
    \qquad a_i\geq 0,\quad \sum_{i=1}^k a_i=1.
\end{align}
The S-LCU is obtained by the sequential composition of multiple independent LCUs, and prepares the state
\begin{align}
    \rho &:= A \rho_0 A^\dagger,\\
    A &:= L_l\cdots L_2 L_1,
    &L_j&:=\sum_{i=1}^k a_i^j U_i^j,
\end{align}
where $\rho_0$ is a normalised input state, $a_i^j\geq0$ with $\sum_{i=1}^k a_i^j=1$ for every $j$, and $U_i^j$ is a unitary acting on $n$ qubits. Since an LCU is not generally unitary, $\rho$ may be subnormalised, with a postselection success probability given by $p_{\mathrm{s}}:=\tr(\rho)$. For a Hermitian observable $O$, we analyse the unnormalised expectation value
\begin{align}\label{eq:cost_definition}
    m:=\tr(\rho O).
\end{align}

When deriving our results, we make extensive use of diagrammatic notation, an introduction to which can be found in Appendix~\ref{app:diagrams}. We define $V_i^j := a_i^j U_i^j$. A single LCU may then be denoted as
\begin{align}
    \input{\tikzitpathtikzfigs/slcu_single.tikz} := \sum_{i=1}^{k} V^j_i,
\end{align}
where the shaded box indicates a sum, indexed by $i$.
Using this notation, the state prepared by the S-LCU may be represented as
\begin{align}
    \rho = \input{\tikzitpathtikzfigs/slcu_borderless.tikz}.
\end{align}
Note that each layer of the S-LCU gets its own shaded box, indicating that the sum is taken independently over each layer.

\section{The S-LCU Loss Landscape}
Barren plateaus were originally defined as an exponential decay of the variance of the gradient with system size~\cite{mcclean2018barren}. It is now common practice to reason instead about the variance of the cost function itself, since exponential cost concentration is known, under the conditions of Ref.~\cite{arrasmith2022equivalence}, to accompany a barren plateau. We therefore use the variance of the expectation value $m$ as a trainability diagnostic:
\begin{align}
    \var[m] = \var[\tr(\rho O)] = E[\tr(\rho O)^2] - E[\tr(\rho O)]^2.
\end{align}
Calculating these expectations requires defining the distributions over which the moments are taken. For the unitaries, we follow the standard practice of using the Haar distribution over the relevant group~\cite{mcclean2018barren,cerezo2021cost}. The LCU coefficients are drawn independently for each layer from the uniform Dirichlet distribution. These choices are maximum entropy distributions, uniform over the parameterised spaces, and model bias-free initialisation strategies used by practitioners without further priors on the target circuit. Thus,
\begin{align}
    U_i^j &\stackrel{\mathrm{i.i.d.}}{\sim} \mathrm{Haar}(\mathcal{G}),\\
    \mathbf a^j &\stackrel{\mathrm{i.i.d.}}{\sim} \mathrm{Dir}(1,\ldots,1),
\end{align}
with the two sampled families mutually independent.

In the following sections, we derive expressions for the first and second moments of the S-LCU expectation value in terms of the commutants of the group from which the unitaries are drawn. We specialise to fermionic Gaussian unitaries in Section~\ref{sec:fermionic}.

\subsection{First Moment}\label{subsec:first_moment}
The first moment of the expectation value $m$ is
\begin{align}
    E[m] = E[\tr(\rho O)] = E\left[\tr \left( A\rho_0 A^\dagger O \right)\right].
\end{align}
Utilising the diagrammatic representation of the S-LCU from the previous section, the first moment may be restated as
\begin{align}\label{eq:slcu_tr}
    E \left[\input{\tikzitpathtikzfigs/slcu_bless_otr.tikz}\right].
\end{align}

Defining the Bell state $\ket{\Phi} = \sum_{x\in\{0,1\}^n} \ket{x}\otimes\ket{x}$, we can use the identity
  \begin{align}
      \operatorname{tr}\left(A\rho_0 A^\dagger O\right)
      =
      \langle\Phi|
      (O\otimes I)
      (A\otimes A^*)
      (\rho_0\otimes I)
      |\Phi\rangle
  \end{align}
to push linear operators around the trace. See Appendix~\ref{app:diagrams} for more details. This yields the following diagram, equivalent to Eq.~\eqref{eq:slcu_tr},
\begin{align} \label{eq:bent_1st_moment_diagram}
    E \left[\input{\tikzitpathtikzfigs/slcu_bless_vert.tikz}\right].
\end{align}
The Haar distribution is invariant under multiplication by group elements, which forces any term in which a unitary $U$ appears without a matching $U^*$ to vanish under the expectation.
\begin{restatable}{lemma}{lemvanishing}\label{lem:vanishing}
Let $U \sim \mathrm{Haar}(\mathcal{G})$ and $a \neq b$ be nonnegative integers. If the representation of the compact group $\mathcal G$ contains a scalar element $\omega I$ satisfying $\omega^{a-b}\neq1$, then
\begin{equation}
    E\!\left[U^{\otimes a} \otimes {U^*}^{\otimes b}\right] = 0.
\end{equation}
\end{restatable}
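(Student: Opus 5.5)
The plan is to use the invariance of the Haar measure on $\mathcal G$ under left multiplication by a fixed group element. We choose that element to be the preimage $g_\omega\in\mathcal G$ whose representative is the scalar $\omega I$, as guaranteed by the hypothesis. Write $R(g)$ for the representation and $M := E[U^{\otimes a}\otimes {U^*}^{\otimes b}]$, where $U = R(g)$ with $g$ Haar-distributed. Since $g \mapsto g_\omega g$ preserves the Haar measure, $R(g_\omega g) = \omega U$ has the same distribution as $U$. Hence
\begin{equation}
    M = E\!\left[(\omega U)^{\otimes a}\otimes {(\omega U)^*}^{\otimes b}\right].
\end{equation}

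The key step is to pull the scalar out of each tensor factor. Each of the $a$ copies of $U$ contributes a factor $\omega$, and each of the $b$ copies of $U^*$ contributes $\omega^* = \bar\omega$. Because $\mathcal G$ is compact, we may take the representation to be unitary; otherwise the Weyl trick makes it unitary, and the statement implicitly treats $U$ as unitary in any case. Then $\omega I$ is unitary, so $|\omega| = 1$ and $\bar\omega = \omega^{-1}$. The right-hand side therefore equals $\omega^{a}\,\omega^{-b} M = \omega^{a-b} M$.

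This yields $(1-\omega^{a-b})M = 0$, and the hypothesis $\omega^{a-b}\neq 1$ forces $M = 0$. The only point needing care, and the mild obstacle, is justifying $|\omega| = 1$. I would argue it directly: the representation is continuous and $\mathcal G$ is compact, so the powers $\omega^m I = R(g_\omega^m)$ stay in a bounded set. This forces $|\omega|\le 1$, and applying the same argument to $g_\omega^{-1}$ gives $|\omega|\ge 1$. This avoids assuming unitarity at all. Linearity of the expectation over the matrix entries justifies factoring out the scalars.
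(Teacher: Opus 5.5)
Your proof is correct and follows the paper's argument: left-invariance of the Haar measure under the element represented by $\omega I$ gives $M=\omega^{a-b}M$, hence $M=0$. Your compactness argument for $|\omega|=1$, which justifies $\bar\omega=\omega^{-1}$ when pulling the scalar out of the $U^*$ factors, makes explicit a step the paper uses without comment.
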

A proof is provided in Appendix~\ref{app:vanishing}. Note that the global phase of each LCU term is physically relevant, and controls interference between terms. It is possible to augment each term with an independent $U(1)$ phase group, i.e. each unitary can be sampled from the $\mathrm{Haar}(U(1) \times \mathcal{G})$ distribution instead. The balanced moments of the augmented ensemble coincide with those of $\mathrm{Haar}(\mathcal G)$, since the phases cancel, whereas all unbalanced moments vanish by the lemma above. Apart from compactness, the calculations below do not assume a particular group.

We now apply Lemma~\ref{lem:vanishing} to the unitary expansion represented by Eq.~\eqref{eq:bent_1st_moment_diagram}, yielding
\begin{align}
    E \left[\input{\tikzitpathtikzfigs/slcu_bless_vert_shared.tikz}\right],
\end{align}
where the single shaded box per layer indicates shared indexing of the sum for both matrices in each column.
The coefficients and unitaries are independent.
Factoring out the common coefficient moment from the sum therefore yields
\begin{align}
    E[a_i^2]^{l}\cdot E \left[\input{\tikzitpathtikzfigs/slcu_bless_vert_shared_U.tikz}\right].
\end{align}
The product of independent Haar-distributed unitaries is itself Haar-distributed. Indeed, for any fixed $g\in\mathcal G$,
\begin{align}
    gU_lU_{l-1}\cdots U_1
    \ \overset{\mathrm{Haar}}{=}\ U_lU_{l-1}\cdots U_1,
\end{align}
by the left-invariance of the Haar distribution ($g U_l \overset{\mathrm{Haar}}{=} U_l$). This in turn allows us to reduce the expectation under the Haar distribution of multiple layers to the expectation of a single layer as
\begin{align}
    U_lU_{l-1}\cdots U_1 \overset{\mathrm{Haar}}{=} U.
\end{align}
Each of the $k^l$ ways of choosing one unitary per layer therefore yields the same single-unitary Haar average, giving
\begin{align}
    E[a_i^2]^{l} \cdot k^l \cdot E \left[\input{\tikzitpathtikzfigs/slcu_bless_single_U.tikz}\right].
\end{align}
To evaluate this remaining Haar average, we use the \textit{vectorised moment operator}. It is the orthogonal projector onto the tensors invariant under the group action, so expressing it in terms of a basis of the commutant converts the unitary integral directly into overlaps with the initial state and observable. This is closely related to the Weingarten calculus construction discussed in Appendix~\ref{app:weingarten}. The $t$-th vectorised moment operator is given by
\begin{align}
    M^{(t)} := \underset{U \sim \mathrm{Haar}(\mathcal G)}{E} \left[U^{\otimes t} \otimes {U^*}^{\otimes t} \right] = \sum_{i=1}^{d_t} |P^t_i \rangle\rangle\langle\langle P^t_i|,
\end{align}
where $\left\{P^t_i\right\}_i$ is a Hermitian Hilbert-Schmidt-orthonormal basis for the $t$-th order commutant of $\mathcal{G}$, of dimension $d_t$~\cite{mele2023haar}. The vectorisation operation $|\cdot\rangle\rangle$ is defined in Appendix~\ref{app:diagrams}.
Applying this identity at first order, and the formula for the moment of the Dirichlet-distributed coefficients, $E[a_i^2] = \tfrac{2}{k (k+1)}$ (Appendix~\ref{app:dirichlet}), gives
\begin{align}
    E[\tr(\rho O)]
    &= \left(\frac{2}{k+1}\right)^{\!l}
    \sum_{i=1}^{d_1} \input{\tikzitpathtikzfigs/trrhopop.tikz}\nonumber\\
    &= \left(\frac{2}{k+1}\right)^{\!l}
    \sum_{i=1}^{d_1}\tr(\rho_0P_i^1)\tr(OP_i^1).
\end{align}

\begin{resultbox}[result:firstmoment]{The S-LCU first moment}
For a compact unitary group $\mathcal G$ satisfying Lemma~\ref{lem:vanishing}, with observable $O$, and initial state $\rho_0$, the first moment is
\begin{equation}
    E[\tr(\rho O)] = \left(\frac{2}{k+1}\right)^{\!l} \sum_{i=1}^{d_1} \tr\left( \rho_0 P_i^1\right) \tr\left( O P_i^1\right),
\end{equation}
a sum over an orthonormal basis for the first-order commutant $\{P_i^1\}$ of $\mathcal{G}$.
\end{resultbox}

The dependence on $\rho_0$, $O$, and the group is therefore identical to the Haar average of a single unitary from $\mathcal G$: stacking changes only the overall factor $(2/(k+1))^{l}$. The second moment is more involved, and is tackled below.

\subsection{Second Moment}\label{subsec:second_moment}
Computing the variance of the cost function requires computing its second moment, given in the S-LCU case by
\begin{align}
    E[\tr(\rho O)^2] = E\left[\tr \left( A\rho_0 A^\dagger O \right)^2\right].
\end{align}
Restating diagrammatically,
\begin{align}
    E\left[\begin{gathered} \input{\tikzitpathtikzfigs/slcu_bless_otr.tikz}\\ \input{\tikzitpathtikzfigs/slcu_bless_otr.tikz} \end{gathered}\right]
\end{align}
By pushing operators around the trace, as in the first-moment calculation, we equivalently obtain
\begin{align}
    E\left[\begin{gathered} \input{\tikzitpathtikzfigs/slcu_bless_vert.tikz}\\ \input{\tikzitpathtikzfigs/slcu_bless_vert.tikz} \end{gathered}\right].
\end{align}

This diagram consists of $l$ vertical slices, independently and identically distributed, since they correspond to different LCUs. We denote the operator formed by layer $j$ by
\begin{align}
    \label{eq:sj}
    S_j:=\sum_{i_1,i_2,i_3,i_4=1}^k
  V_{i_1}^j\otimes (V_{i_2}^j)^*
  \otimes V_{i_3}^j\otimes (V_{i_4}^j)^*.
\end{align}
Independence between layers gives
\begin{align}
    E[S_l\cdots S_1]=E[S_l]\cdots E[S_1]=E[S_j]^l.
\end{align}
Consequently,
\begin{align}\label{eq:rhorhosliceobsobs}
    E[\tr(\rho O)^2] = \input{\tikzitpathtikzfigs/slcu_rhorho_tall.tikz} E\left[\input{\tikzitpathtikzfigs/slcu_bless_slice.tikz}\right]^l \input{\tikzitpathtikzfigs/slcu_obsobs_tall.tikz}
\end{align}
The expansion \eqref{eq:sj} introduces four indices: one choice of unitary and one choice of conjugate unitary in each of the two copies of the trace. All four are drawn from the same $j$-th LCU, but the sums choose their indices independently before the Haar average imposes pairings. 
Applying Lemma~\ref{lem:vanishing} to the four factors in a slice, every distinct unitary label must occur equally often in unconjugated and conjugated form. This leaves exactly three index-identification patterns,
\begin{align}
    \scalebox{0.8}{$\input{\tikzitpathtikzfigs/paired_aaaa.tikz} \qquad \input{\tikzitpathtikzfigs/paired_aabb.tikz} \qquad \input{\tikzitpathtikzfigs/paired_abba.tikz} $},
\end{align}
where the shaded boxes enclose like unitaries and their corresponding conjugates. These represent the pairings
\begin{align}
 \pi_1 &:~U_i\otimes U_i^*\otimes U_i\otimes U_i^*,\nonumber\\
 \pi_2 &:~U_i\otimes U_i^*\otimes U_{i'}\otimes U_{i'}^*,\nonumber\\
 \pi_3 &:~U_i\otimes U_{i'}^*\otimes U_{i'}\otimes U_i^*,
 \qquad i\ne i',
\end{align}
respectively. The paired-slice picture above may be restated as below, by permuting wires appropriately,
\begin{align}
    \scalebox{0.8}{$\input{\tikzitpathtikzfigs/paired_aaaa_swapped.tikz}$ \qquad \input{\tikzitpathtikzfigs/paired_aabb.tikz} \qquad \input{\tikzitpathtikzfigs/paired_abba_swapped.tikz}}.
\end{align}

Recall that these diagrams contain not just the unitaries, but also the coefficients drawn from the uniform Dirichlet distribution. The slice average is
\begin{align}
    E[S_j] &= k\cdot E[a_i^4]M_{\pi_1}\nonumber\\
    &\quad+k(k-1)\cdot E[a_i^2a_{i'}^2]M_{\pi_2}\nonumber\\
    &\quad+k(k-1)\cdot E[a_i^2a_{i'}^2]M_{\pi_3}, \qquad i\ne i'.
\end{align}
Note that the counts here are order-sensitive. The associated moment operators can be written compactly as
\begin{align}
    M_{\pi_1} &= \sum_{i=1}^{d_2}
    |P^{\pi_1}_i \rangle\rangle\langle\langle P^{\pi_1}_i|,\\
    M_{\pi_q} &= \sum_{i,j=1}^{d_1}
    |P^{\pi_q}_{ij} \rangle\rangle\langle\langle P^{\pi_q}_{ij}|,
    \qquad q\in\{2,3\},
\end{align}
where each expression is a sum over vectorised orthonormal basis elements,
\begin{gather}
    |P^{\pi_1}_i\rangle\rangle := \input{\tikzitpathtikzfigs/paired_aaaa_onb_half.tikz} \qquad |P^{\pi_2}_{ij}\rangle\rangle := \input{\tikzitpathtikzfigs/paired_aabb_onb_half.tikz} \\
    |P^{\pi_3}_{ij}\rangle\rangle := \input{\tikzitpathtikzfigs/paired_abba_onb_half.tikz}.
\end{gather}

We concatenate these orthonormal basis elements into an ordered, overcomplete frame for the commutant of a single S-LCU layer,
\begin{align}\label{eq:ordered-frame}
    \mathbf{P} := \left[ P^{\pi_1}_i \right]_i || \left[ P^{\pi_2}_{ij} \right]_{ij} || \left[ P^{\pi_3}_{ij} \right]_{ij},
\end{align}
of $D := d_2 + 2\cdot d_1^2$ elements. Similarly, the associated coefficients are arranged in a vector $p$. The ordering need only be consistent between the frame and the coefficient vector.
The expectation from Eq.~\eqref{eq:rhorhosliceobsobs} may then be re-expressed as
\begin{align}
    E[\tr(\rho O)^2] = \input{\tikzitpathtikzfigs/slcu_rhorho.tikz} \left[\sum_{i=1}^D p_i\input{\tikzitpathtikzfigs/slcu_bless_pipi_slice.tikz}\right]^l \input{\tikzitpathtikzfigs/slcu_obsobs.tikz}.
\end{align}
We define two matrices to aid in the calculation of this expression,
\begin{align}
    V :=&~\sum_{i=1}^{D} |i\rangle\langle\langle \mathbf{P}_i| &\in \mathbb{C}^{D\times d^4} \\
    D_b :=&~\mathrm{Diag}(p) = \sum_{i=1}^{D} p_i\cdot\ket{i}\bra{i} &\in \mathbb{R}^{D\times D}.
\end{align}

Defining $|A,A\rangle\rangle := |A\rangle\rangle \otimes |A\rangle\rangle$, we have
\begin{align}
    E[\tr(\rho O)^2] =& \langle\langle O,O| (V^\dagger D_b V)^l |\rho_0,\rho_0\rangle\rangle\\
    =& \langle\langle O,O| V^\dagger (D_b V V^\dagger)^{l-1} D_b V |\rho_0,\rho_0\rangle\rangle.
\end{align}
Observe that $G:=VV^\dagger$, with entries $G_{ij}=\langle\langle\mathbf P_i|\mathbf P_j\rangle\rangle$, is the Gram matrix containing all inner products between elements of the ordered frame.
\begin{resultbox}[result:secondmoment]{The S-LCU second moment}
For a compact unitary group $\mathcal G$ satisfying Lemma~\ref{lem:vanishing}, with observable $O$, and initial state $\rho_0$, the S-LCU second moment takes the form
\begin{equation*}
    E[\tr(\rho O)^2] = \langle\langle O,O| V^\dagger (D_b G)^{l-1} D_b V |\rho_0,\rho_0\rangle\rangle ,
\end{equation*}
consisting of the following elements,
\begin{itemize}
    \item the Gram matrix $G = VV^\dagger$;
    \item the initial state projections $V|\rho_0,\rho_0\rangle\rangle$; and
    \item the observable projections $V|O,O\rangle\rangle$.
\end{itemize}
\end{resultbox}

In the following section, we consider fermionic Gaussian unitaries and derive the Gram matrix $G$ explicitly.

\section{Fermionic Gaussian Unitaries}\label{sec:fermionic}
For $n$ fermionic modes, encoded on $n$ qubits, let $c_1,\ldots,c_{2n}$ be Hermitian Majorana operators satisfying $\{c_\mu,c_\nu\}=2\delta_{\mu\nu}I$. Up to a global phase, a parity-preserving \emph{fermionic Gaussian unitary} has the form
\begin{align}
    U=e^{-iH},\qquad
    H=\frac{i}{4}\sum_{\mu,\nu=1}^{2n}h_{\mu\nu}c_\mu c_\nu,
    \qquad h\in\mathfrak{so}(2n),
\end{align}
where
\begin{align}
    \mathfrak{so}(2n):=
    \{h\in\mathbb R^{2n\times2n}:h^T=-h\}
\end{align}
is the Lie algebra of $SO(2n)$, consisting of real antisymmetric matrices. The conjugation action of $U$ is linear,
\begin{align}
    U^\dagger c_\mu U=\sum_{\nu=1}^{2n}R_{\mu\nu}c_\nu,
    \qquad R\in SO(2n).
\end{align}
The matrix $R$ records how $U$ rotates the $2n$ Majorana operators, whereas $U$ acts on the $2^n$-dimensional Fock space. The unitaries $U$ and $-U$ induce the same rotation, so the correspondence is two-to-one. The group that retains this sign is $\operatorname{Spin}(2n)$, the double cover of $SO(2n)$. Relative global phases between LCU terms affect their interference. Accordingly, the FF-S-LCU uses the Haar distribution on
\begin{align}\label{eq:phase_augmented_group}
    \mathcal G_{\mathrm{FF}}:=U(1)\times\operatorname{Spin}(2n),
\end{align}
where the $U(1)$ factor enforces Lemma~\ref{lem:vanishing}. Appendix~\ref{app:prelim} collects the remaining free fermion definitions and identities.

Fermionic Gaussian unitaries can be classically simulated in time polynomial in the number of qubits~\cite{valiant2001quantum,terhal2002classical}. Previous work has connected their polynomial trainability to their dynamical Lie algebra~\cite{matos2023characterization,diaz2023showcasing,cerezo2023simulability,fontana2024characterizing,kokcu2022fixed,wiersema2024classification}. They are a natural choice for the S-LCU: a single linear combination already increases their expressivity at a quantifiable cost in trainability~\cite{khatri2025trainability}, and stacking allows that trade-off to be tuned further. In this section we analyse concretely the variance of the loss landscape of the FF-S-LCU: an S-LCU composed of fermionic Gaussian unitaries. We begin by reviewing the first and second order commutants of the underlying group. 

\subsubsection{First-Order Commutant}
The first-order commutant of the group of fermionic Gaussian unitaries is well-studied~\cite{diaz2023showcasing,sierant2026matchgate,lastres2026geometry}.
It is spanned by the orthonormal basis
\begin{align}
    \left\{\frac{I}{\sqrt d},\frac{P}{\sqrt d}\right\}
    =\left\{ \frac{1}{\sqrt{d}}\cdot\input{\tikzitpathtikzfigs/comm1_I.tikz}, \frac{1}{\sqrt{d}}\cdot\input{\tikzitpathtikzfigs/comm1_P.tikz} \right\},
    \qquad d=2^n.
\end{align}
Here $P$ is the \textit{parity} operator, defined as
\begin{align}
    P := Z^{\otimes n} = (-i)^n \cdot c_1c_2 \dots c_{2n}.
\end{align}

\subsubsection{Second-Order Commutant}
We use the definition of the second-order commutant from ~\citet{diaz2023showcasing}, which has also been characterised in Refs.~\cite{sierant2026matchgate,lastres2026geometry}:
\begin{align}
    \left\{ \input{\tikzitpathtikzfigs/qk_0_def.tikz} \right\}_{\kappa=0}^{2n} \cup \quad \left\{ \input{\tikzitpathtikzfigs/qk_1.tikz} \right\}_{\kappa=0}^{2n},
\end{align}
where
\begin{align}
    \input{\tikzitpathtikzfigs/qk_0_def.tikz} &:= \mathcal{N}_\kappa \sum_{s \in \binom{[2n]}{\kappa}} c^s \otimes c^s,\\
    \input{\tikzitpathtikzfigs/qk_1.tikz}  &:= i^{\kappa\bmod 2} \cdot \input{\tikzitpathtikzfigs/qk_1_def.tikz},
\end{align}
where $\binom{[2n]}{\kappa}:=\{s\subseteq[2n]:|s|=\kappa\}$ and $\mathcal{N}_\kappa := \left(2^n \sqrt{\binom{2n}{\kappa}}\right)^{-1}$. Algebraically, the second line is
$Q_\kappa^1=i^{\kappa\bmod 2}(I\otimes P)Q_\kappa^0$.

\subsubsection{Gram Matrix}\label{subsec:gram-matrix}
The full Gram matrix $G$ for the FF-S-LCU, expressed in the ordered frame $\mathbf{P}$ defined in Eq.~\eqref{eq:ordered-frame}, is given below. We define the following shorthand:
\begin{gather}
    \nu_\kappa := \frac{\sqrt{\binom{2n}{\kappa}}}{d}, \quad
    \sigma_\kappa := (-1)^{\lfloor \kappa/2 \rfloor}, \\
    \epsilon_\kappa := (-1)^{\kappa(\kappa+1)/2}.        
\end{gather}
where $d = 2^n$.

Derivations for each scalar element of the matrix are provided in Appendix~\ref{app:grammatrix}.
\begin{center}
    \scalebox{0.65}{
    \begin{tabular}{ccc!{\color{gray!60}\vrule width 0.4pt}cccc!{\color{gray!60}\vrule width 0.4pt}cccc}
        &
        \scalebox{0.5}{\input{\tikzitpathtikzfigs/ordered_vects/8_qk0_m.tikz}} &
        \scalebox{0.5}{\input{\tikzitpathtikzfigs/ordered_vects/9_qk1_m.tikz}} &
        \scalebox{0.5}{\input{\tikzitpathtikzfigs/ordered_vects/0_cc_ii_m.tikz}} &
        \scalebox{0.5}{\input{\tikzitpathtikzfigs/ordered_vects/1_cc_pi_m.tikz}} &
        \scalebox{0.5}{\input{\tikzitpathtikzfigs/ordered_vects/2_cc_ip_m.tikz}} &
        \scalebox{0.5}{\input{\tikzitpathtikzfigs/ordered_vects/3_cc_pp_m.tikz}} &
        \scalebox{0.5}{\input{\tikzitpathtikzfigs/ordered_vects/4_dd_ii_m.tikz}} &
        \scalebox{0.5}{\input{\tikzitpathtikzfigs/ordered_vects/5_dd_pi_m.tikz}} &
        \scalebox{0.5}{\input{\tikzitpathtikzfigs/ordered_vects/6_dd_ip_m.tikz}} &
        \scalebox{0.5}{\input{\tikzitpathtikzfigs/ordered_vects/7_dd_pp_m.tikz}} \\[10pt]
        \scalebox{0.5}{\input{\tikzitpathtikzfigs/ordered_vects/8_qk0.tikz}} & $\delta_{\kappa,\kappa'}$ & $\cdot$ & $\cdot$ & $\cdot$ & $\cdot$ & $\cdot$ & $\cdot$ & $\cdot$ & $\cdot$ & $\cdot$ \\[10pt]
        \scalebox{0.5}{\input{\tikzitpathtikzfigs/ordered_vects/9_qk1.tikz}} & 0 & $\delta_{\kappa,\kappa'}$ & $\cdot$ & $\cdot$ & $\cdot$ & $\cdot$ & $\cdot$ & $\cdot$ & $\cdot$ & $\cdot$ \\[10pt]
        \noalign{\vskip -5pt}\arrayrulecolor{gray!60}\cline{2-11}\arrayrulecolor{black}\noalign{\vskip 5pt}
        \scalebox{0.5}{\input{\tikzitpathtikzfigs/ordered_vects/0_cc_ii.tikz}} & $\delta_{0,\kappa}$ & 0 & 1 & $\cdot$ & $\cdot$ & $\cdot$ & $\cdot$ & $\cdot$ & $\cdot$ & $\cdot$ \\[10pt]
        \scalebox{0.5}{\input{\tikzitpathtikzfigs/ordered_vects/1_cc_pi.tikz}} & 0 & $(-1)^n\delta_{2n,\kappa}$ & 0 & 1 & $\cdot$ & $\cdot$ & $\cdot$ & $\cdot$ & $\cdot$ & $\cdot$ \\[10pt]
        \scalebox{0.5}{\input{\tikzitpathtikzfigs/ordered_vects/2_cc_ip.tikz}} & 0 & $\delta_{0,\kappa}$ & 0 & 0 & 1 & $\cdot$ & $\cdot$ & $\cdot$ & $\cdot$ & $\cdot$ \\[10pt]
        \scalebox{0.5}{\input{\tikzitpathtikzfigs/ordered_vects/3_cc_pp.tikz}} & $(-1)^n\delta_{2n,\kappa}$ & 0 & 0 & 0 & 0 & 1 & $\cdot$ & $\cdot$ & $\cdot$ & $\cdot$ \\[10pt]
        \noalign{\vskip -5pt}\arrayrulecolor{gray!60}\cline{2-11}\arrayrulecolor{black}\noalign{\vskip 5pt}
        \scalebox{0.5}{\input{\tikzitpathtikzfigs/ordered_vects/4_dd_ii.tikz}} & $\sigma_\kappa \nu_\kappa$ & 0 & $\frac{1}{d}$ & 0 & 0 & $\frac{1}{d}$ & 1 & $\cdot$ & $\cdot$ & $\cdot$ \\[10pt]
        \scalebox{0.5}{\input{\tikzitpathtikzfigs/ordered_vects/5_dd_pi.tikz}} & 0 & $i^{\kappa\bmod 2}\sigma_\kappa \nu_\kappa$ & 0 & $\frac{1}{d}$ & $\frac{1}{d}$ & 0 & 0 & 1 & $\cdot$ & $\cdot$ \\[10pt]
        \scalebox{0.5}{\input{\tikzitpathtikzfigs/ordered_vects/6_dd_ip.tikz}} & 0 & $i^{\kappa\bmod 2}\epsilon_\kappa \nu_\kappa$ & 0 & $\frac{1}{d}$ & $\frac{1}{d}$ & 0 & 0 & 0 & 1 & $\cdot$ \\[10pt]
        \scalebox{0.5}{\input{\tikzitpathtikzfigs/ordered_vects/7_dd_pp.tikz}} & $\epsilon_\kappa \nu_\kappa$ & 0 & $\frac{1}{d}$ & 0 & 0 & $\frac{1}{d}$ & 0 & 0 & 0 & 1
    \end{tabular}
    }
\end{center}
We plot only the elements on or below the principal diagonal, with $G^\dagger = G$ specifying the rest.

\subsubsection{Projections of Initial State and Observable}
Result~\ref{result:secondmoment} contracts the layer operator with the initial state and observable. The initial state boundary vector needed for this contraction has entries
\begin{align}
    \frac{1}{d} \Big[
        &\tr(\rho_0)^2,~
        \tr(P\rho_0)\tr(\rho_0),~
        \tr(P\rho_0)\tr(\rho_0),\nonumber\\
        &\tr(P\rho_0)^2,~
        \tr(\rho_0^2),~
        \tr(P\rho_0^2),~
        \tr(P\rho_0^2), \nonumber\\
        &\tr(P\rho_0P\rho_0),~
        \Ptil_\kappa(\rho_0),~
        \Ctil_\kappa(\rho_0)
    \Big]
\end{align}
where
\begin{align}
    \Ptil_\kappa(\rho_0) := \tr(\rho_0^{\otimes 2} Q_{\kappa}^0), \quad
    \Ctil_\kappa(\rho_0) := \tr(\rho_0^{\otimes 2} Q_{\kappa}^1).
\end{align}
The observable boundary vector $V |O,O\rangle\rangle$, which closes the other side of the same contraction, has the same form with $\rho_0$ replaced by $O$.

\subsection{FF-S-LCU Loss Landscape}

Using the Gram matrix $G$, coefficient matrix $D_b$, and the initial state and observable vectors derived above,
the variance $\var[m]$ of the FF-S-LCU may be expressed as a function of the number of qubits $n$, LCU term count $k$, and depth $l$.

\begin{restatable}{theorem}{thmbound}\label{thm:bound}
For a traceless quadratic observable $O$ with $\tr(O^2)=d$, $\rho_0 = \ketbra{0}{0}^{\otimes n}$, $n \geq 3$,
with the variance taken over the joint distribution $\mathrm{Haar}(\mathcal G_{\mathrm{FF}})^{\otimes kl} \times \mathrm{Dir}(1,\dots,1)^{\otimes l}$,
\[
    \var[m]\geq
    \frac{1}{2n-1}
    \left(\frac{24}{(k+1)(k+2)(k+3)}\right)^{\!l}
    \in\Omega\!\left(\frac{1}{n\, k^{3l}}\right).
\]
\end{restatable}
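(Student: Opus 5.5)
The plan is to work directly with Result~\ref{result:secondmoment}. Since $O$ is traceless and quadratic, it is orthogonal to both $I$ and $P$ in Hilbert--Schmidt norm ($\tr(PO)=0$, because $PO$ has degree $2n-2\neq 0$ as a Majorana monomial sum). Result~\ref{result:firstmoment} then gives $E[m]=0$, so $\var[m]=E[m^2]$.

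The first step is to compute the observable boundary vector $V|O,O\rangle\rangle$. Every entry involving $\tr(O)$ or $\tr(PO)$ vanishes. The surviving entries are:
\begin{itemize}
\item the $\pi_2/\pi_3$ entries $\tr(O^2)/d=1$ and $\tr(PO^2)/d$, together with $\tr(POPO)/d$;
\item the $\pi_1$ entries $\Ptil_\kappa(O)$ and $\Ctil_\kappa(O)$.
\end{itemize}
For $O=\tfrac{i}{2}\sum h_{\mu\nu}c_\mu c_\nu$, only $\kappa=2$ contributes to $\Ptil_\kappa(O)$. Moreover $POP=O$ because quadratics commute with parity, so $\tr(POPO)=d$. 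The initial-state vector for $\rho_0=\ketbra{0}{0}^{\otimes n}$ is explicit: $\tr(\rho_0)=\tr(P\rho_0)=\tr(\rho_0^2)=1$, and $\Ptil_\kappa(\rho_0)$ is computable from the Gaussian covariance of the vacuum. The result is $\binom{n}{\kappa/2}$ for even $\kappa$, with appropriate normalisation.

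Rather than diagonalising $(D_bG)^{l-1}$ exactly, which looks painful, the key step is a \textbf{positivity argument}. Write $E[S_j]=\sum_q c_q M_{\pi_q}$, where each $M_{\pi_q}$ is an orthogonal projector, hence positive semidefinite, and every coefficient $c_q$ is nonnegative. From the Dirichlet moments,
\[
E[a_i^4]=\frac{24}{k(k+1)(k+2)(k+3)},\qquad E[a_i^2a_{i'}^2]=\frac{4}{k(k+1)(k+2)(k+3)}.
\]
It follows that $E[S_j]\succeq k\,E[a^4]M_{\pi_1}=\frac{24}{(k+1)(k+2)(k+3)}\,M_{\pi_1}$.

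However, $E[S_j]^l$ for a PSD sum does not directly satisfy $\succeq(\text{const})^l M_{\pi_1}^l$ unless the terms commute. I would therefore handle this more carefully: $|O,O\rangle\rangle$ and $|\rho_0,\rho_0\rangle\rangle$ are different vectors, so a Loewner bound alone does not suffice.

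The alternative is to expand $E[S_j]^l$ into words in $M_{\pi_1},M_{\pi_2},M_{\pi_3}$ and show that every word contributes nonnegatively to $\langle\langle O,O|\cdots|\rho_0,\rho_0\rangle\rangle$. This holds if all frame overlaps that actually appear are nonnegative. After restricting to the relevant sector (even $\kappa$, the $Q^0_\kappa$ block, and the $I,P$ frame elements), the entries listed in the Gram table are $\delta$'s, $1/d$, and $\sigma_\kappa\nu_\kappa$ or $\epsilon_\kappa\nu_\kappa$. Their signs need checking, and the boundary entries need sign checks too.

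If sign control fails, the fallback is to project both boundary vectors onto the $\pi_1$ subspace and use the fact that $M_{\pi_1}$ is a projector commuting with $E[S_j]$ restricted to that invariant subspace. Since $\mathrm{span}\{\pi_1 \text{ frame}\}$ contains the images of $M_{\pi_2},M_{\pi_3}$ (the tensor-product basis elements $I\otimes I$, $P\otimes P$, etc., lie in the second-order commutant), $E[S_j]$ maps the commutant to itself and is PSD there.

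Assuming the pure-$\pi_1$ word dominates, the bound reduces to
\[
\left(\frac{24}{(k+1)(k+2)(k+3)}\right)^{l}\langle\langle O,O|M_{\pi_1}|\rho_0,\rho_0\rangle\rangle .
\]
Since $M_{\pi_1}^l=M_{\pi_1}$, the remaining factor is the single fermionic Gaussian Haar variance, $\sum_\kappa \Ptil_\kappa(O)\Ptil_\kappa(\rho_0)+\Ctil$ terms. Only $\kappa=2$ survives on the $O$ side: $\Ptil_2(O)\propto\tr(O^2)/(d\binom{2n}{2})$ and $\Ptil_2(\rho_0)\propto n/\sqrt{\binom{2n}{2}}$. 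This yields
\[
\frac{n}{\binom{2n}{2}}=\frac{1}{2n-1},
\]
which is the known free-fermion variance and matches the claimed prefactor. The $\Ctil_\kappa$ and parity-twisted contributions vanish or are nonnegative, with $n\ge3$ used to exclude coincidences such as $2=2n-2$.

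The main obstacle is the middle step: justifying that the cross terms, that is the words containing $M_{\pi_2}$ and $M_{\pi_3}$, contribute nonnegatively between the distinct boundary vectors $|O,O\rangle\rangle$ and $|\rho_0,\rho_0\rangle\rangle$. I would first try to exhibit a basis of the invariant subspace in which $G$, $D_b$, and both boundary vectors are entrywise nonnegative, possibly after fixing the signs $\sigma_\kappa,\epsilon_\kappa$ by restricting to $\kappa\in\{0,2,2n-2,2n\}$. Entrywise nonnegativity then makes $(D_bG)^{l-1}$ nonnegative, and dropping all terms except the $\pi_1$-diagonal path gives the bound.
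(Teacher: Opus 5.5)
Your proposal has a genuine gap at the step you flag yourself. You never show that the words containing $M_{\pi_2}$ or $M_{\pi_3}$ contribute nonnegatively. The bound is reached only by \emph{assuming} that the pure-$M_{\pi_1}$ word dominates.

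The route you sketch for closing this would not work. You cannot make the full-frame Gram matrix $G$ entrywise nonnegative by any rescaling of the frame elements. For odd $\kappa$ one has $\sigma_\kappa\epsilon_\kappa=-1$, so $Q^0_\kappa$ overlaps the $\pi_3$ elements built from $(I,I)$ and from $(P,P)$ with opposite signs. The cycle $(I,I)_{\pi_3}\to Q^0_\kappa\to(P,P)_{\pi_3}\to Q^0_{\kappa'}\to(I,I)_{\pi_3}$, with $\kappa$ odd and $\kappa'$ even, then has weight $-\nu_\kappa^2\nu_{\kappa'}^2<0$, and the sign of a cycle product is invariant under rescaling.

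Nor may you restrict to $\kappa\in\{0,2,2n-2,2n\}$. The intermediate indices of $(D_bG)^{l-1}$ run over the whole frame, and the negative odd-$\kappa$ paths are exactly what cancel the positive even-$\kappa$ ones. For example, $\sum_\kappa(\sigma_\kappa\nu_\kappa)(\epsilon_\kappa\nu_\kappa)=d^{-2}\sum_\kappa(-1)^\kappa\binom{2n}{\kappa}=0$ reproduces the vanishing overlap of those two $\pi_3$ elements. Discarding such paths computes a different quantity, so no bound transfers. Positive semidefiniteness on the commutant does not help either, as you note, since $\dk{O,O}\neq\dk{\rho_0,\rho_0}$.

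The missing idea is to use the inclusion you state in your fallback algebraically, before passing to the frame. Because the $\pi_2,\pi_3$ frame elements lie in the second-order commutant, $M_{\pi_1}M_{\pi_q}=M_{\pi_q}M_{\pi_1}=M_{\pi_q}$ for $q=2,3$. Set $\alpha=k\,E[a_i^4]$, $\beta=k(k-1)E[a_i^2a_j^2]$ and $R=M_{\pi_2}+M_{\pi_3}$. The two summands of $\Lambda=\alpha M_{\pi_1}+\beta R$ commute, and every word containing $R$ absorbs its $M_{\pi_1}$ factors, so
\[
\Lambda^l=\alpha^lM_{\pi_1}+\sum_{r=1}^{l}\binom{l}{r}\alpha^{l-r}\beta^{r}R^{r}.
\]
Since $R=V_R^\dagger V_R$, with $V_R$ the restriction of $V$ to the eight first-order elements, you get $\db{O,O}R^r\dk{\rho_0,\rho_0}=\vv{o}_R^\top G_R^{r-1}\vv{s}_R$. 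This is a walk that never visits any $Q^j_\kappa$.

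The entries of $G_R$ lie in $\{0,1/d,1\}$. For traceless quadratic $O$ one has $\vv{o}_R=(0,0,0,0,1,0,0,1)$, using $POP=O$ and, for $n\geq3$, $\tr(PO^2)=0$. For the vacuum, $\vv{s}_R=\frac1d(1,\dots,1)$. Hence every $r\geq1$ term is nonnegative and may be dropped, leaving $\alpha^l\db{O,O}M_{\pi_1}\dk{\rho_0,\rho_0}=\alpha^l\Var[m_U]$; this is Theorem~\ref{thm:factorise}.

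The sign-indefinite second-order overlaps never need controlling. They enter only through the single-unitary variance, which your $\kappa=2$ computation correctly gives as $1/(2n-1)$. One small correction there: $\Ptil_2(O)=-\tr(O^2)/\sqrt{\binom{2n}{2}}$, not something proportional to $\tr(O^2)/(d\binom{2n}{2})$, though the product you quote is right.
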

A proof of this bound is provided in Appendix~\ref{app:general_observables} (via Theorem~\ref{thm:factorise}). The mean vanishes because $\tr(O)=\tr(PO)=0$, so the variance equals the second moment used in the proof. Appendix~\ref{app:general_observables} explains how the same methods extend to arbitrary initial states and observables.

For a fixed number of layers, the lower bound decreases polynomially in both the number of qubits and the number of terms in each LCU, ruling out exponential cost concentration in either variable. This conclusion does not apply if $l$ itself grows with the problem size. In Figure~\ref{fig:variance-by-qubit-count} we plot the exact variance against the number of qubits for different values of $l$.

\onecolumngrid
\begin{center}
    \includegraphics[width=\textwidth]{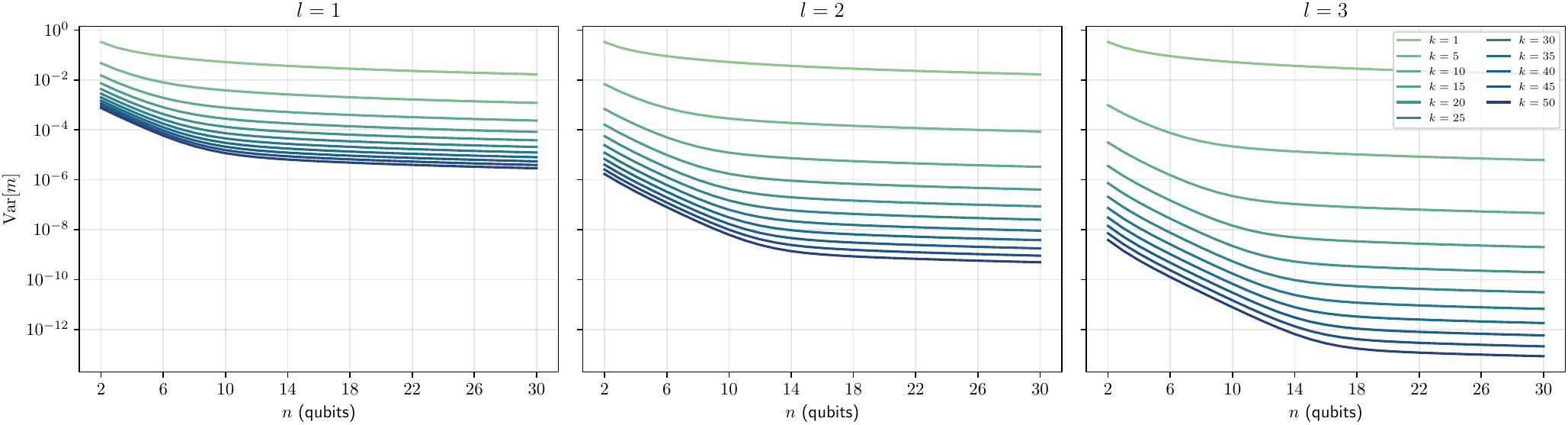}
    \captionof{figure}{Variance of the unnormalised expectation value $\var[m]$ as a function of the number of qubits $n$, for $l \in \{1,2,3\}$ and varying LCU term count $k$. Observable $O = Z_1$, initial state $\rho_0 = \ketbra{0}{0}^{\otimes n}$.}
    \label{fig:variance-by-qubit-count}
\end{center}
\twocolumngrid

\begin{center}
    \includegraphics[width=\columnwidth]{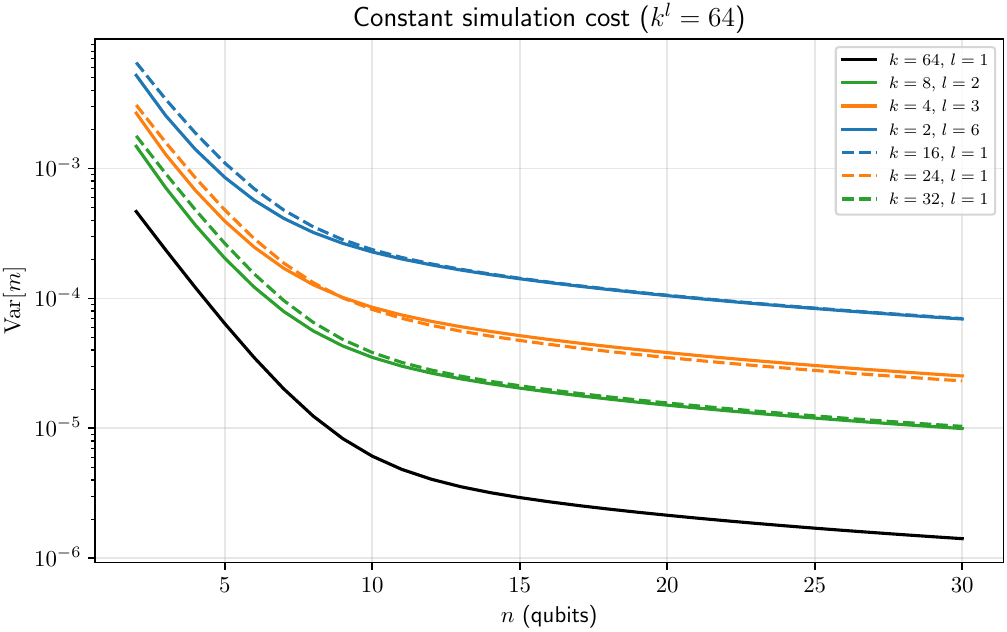}
    \captionof{figure}{Variance for FF-S-LCU configurations with constant expanded term count $k^l = 64$. Solid lines: $(k,l)$ pairs with $k^l = 64$. Dashed lines: single-layer models with comparable $k$. Distributing a fixed term count over more layers produces a larger variance in these examples.}
    \label{fig:variance-fixed-term-count}
\end{center}

Figure~\ref{fig:variance-fixed-term-count} plots the variance as a function of the number of qubits for various $(k,l)$ combinations with $k^l=64$. For the configurations shown, keeping the expanded term count (and therefore direct simulation cost) constant while increasing the number of layers reduces cost concentration.

\subsection{Time Complexity: Classical and Quantum}

Free fermion dynamics can be classically simulated in polynomial time~\cite{valiant2001quantum,terhal2002classical}. A single fermionic Gaussian unitary on $n$ qubits is fully characterised, up to phase, by an $SO(2n)$ rotation matrix, and overlaps between fermionic Gaussian states can be computed in $O(n^3)$ time, provided the relative phase is tracked.
Recent work has extended efficient classical simulation to states of bounded \emph{Gaussian rank}, where the rank is the minimum number of fermionic Gaussian states in a superposition representing the state~\cite{dias2024classical,reardonsmith2024improved,cudby2024gaussian}. Given an explicit decomposition into $r$ Gaussian states, the direct cost of computing a fixed-complexity local expectation value is $O(r^2n^3)$, since each of the $r^2$ cross-terms requires a phased overlap computation in $O(n^3)$ classical operations.

For the S-LCU model, expanding the product of $l$ layers yields a superposition of $k^l$ fermionic Gaussian states. The resulting state has fermionic Gaussian rank at most $k^l$, and direct pairwise evaluation of these overlaps yields a complexity
\begin{align}\label{eq:classical_upper_bound}
    t_{\mathrm{direct}} \in O\!\left((l+k^{2l})n^3\right)
    =O(k^{2l}n^3)\quad (k\geq2).
\end{align}
We are not aware of a method that exploits the stacked structure to lower the exponent in $k$. On the quantum side, each layer requires implementing an LCU of $k$ fermionic Gaussian unitaries, realisable in $O(kn^2)$ two-qubit gates~\cite{babbush2018encoding}. Before accounting for postselection or amplitude amplification, the total gate count for $l$ layers is therefore
\begin{align}\label{eq:quantum_gate_upper_bound}
    N_{\mathrm{gates}}\in O(lkn^2).
\end{align}

This runtime analysis does not account for the postselection cost associated with implementing non-unitary operations. Depending on the application, the runtime for obtaining successful shots may scale with the inverse success probability. We comment on this in Section~\ref{sec:discussion}. The variance calculation, however, does account for the postselection probability, since the subnormalisation of $\rho=A\rho_0A^\dagger$ is precisely this probability and $m=\tr(\rho O)$ is weighted by it. No separate correction is therefore required for the variance bound. 

\begin{resultbox}[result:tradeoff]{The FF-S-LCU tradeoff}
For a traceless quadratic observable $O$ with $\tr(O^2)=d$, $\rho_0 = \ketbra{0}{0}^{\otimes n}$, and $n\geq3$, the unnormalised cost variance satisfies
\begin{align}
    \var[m] &\in \Omega\!\left(\frac{1}{n\,k^{3l}}\right).
\end{align}
For fixed $l$, direct classical simulation using term expansion has the upper bound
\begin{align}
    t_{\mathrm{direct}} &\in O\!\left((l+k^{2l})n^3\right),
\end{align}
while the quantum circuit's gate count remains linear in $k$ before postselection overhead, scaling as
\begin{align}
    N_{\mathrm{gates}} &\in O\!\left(l k n^2\right).
\end{align}
\end{resultbox}

\section{Discussion: Practical Settings and Training Strategies}
\label{sec:discussion}

Sequentially composed LCUs already occur in several quantum algorithms. Truncated-Taylor Hamiltonian simulation divides an evolution into short-time segments and implements every segment by an LCU followed by robust oblivious amplitude amplification~\cite{berry2015taylor}. The truncated-Dyson algorithm uses the same segmented structure for time-dependent Hamiltonians~\cite{kieferova2019dyson}, while multiproduct formula simulation implements each short-time step as an LCU of product formulas and repeats the amplified step~\cite{childs2012,low2019multiproduct}. More abstractly, LCU constructions are a standard route to block encodings, and block encodings can be composed to encode matrix products~\cite{gilyen2019qsvt,chakraborty2019blockpowers}. These examples are not variational ans{\"a}tze, but they establish that a product of LCU-derived maps is a natural circuit primitive. They also illustrate an important implementation condition: the cited algorithms avoid postselecting each individual LCU. Instead, they keep the construction coherent and control the normalisation, or make each segment approximately unitary and amplify it before applying the next segment.

The layered parameterisation suggests alternatives to the fully random initialisation analysed above. One may first optimise a single LCU, append a new layer initialised near the identity, and then resume optimisation. Incrementally increasing circuit depth has been observed to retain larger gradients than training the full circuit from scratch~\cite{skolik2021layerwise}. The same idea can be applied term-wise within a layer. Candidate unitaries can be added according to a gradient or residual criterion, assigned a small nonzero coefficient, and then optimised together with the existing terms. This is analogous in spirit to adaptive variational methods that grow an ansatz one operator at a time~\cite{grimsley2019adaptive}, but here the added resource is an LCU term; the expanded term count upper-bounds the fermionic Gaussian rank. These warm-start strategies keep the effective depth and term count small when initialising. They do not, however, constitute a guarantee during optimisation: our present bounds concern independent uniform initialisation (as captured by the Haar/Dirichlet distributions), and analysing structured, correlated initialisations and the gradients encountered throughout this staged optimisation remains an open problem.

\section{Conclusion}
Our primary contribution is the S-LCU, a stacked linear combination of unitaries variational ansatz, together with a diagrammatic framework for characterising its loss landscape when the unitaries are drawn from a compact group $\mathcal G$. This gives practitioners a systematic way to tailor the complexity-trainability balance to their application and hardware.

Combining this framework with known results on the linear and quadratic commutants of fermionic Gaussian unitaries, we bound the rate of decay of the loss-landscape variance of the FF-S-LCU for traceless quadratic observables. For a fixed number of layers, the variance is bounded from below by an inverse polynomial in both the number of terms per layer and the number of qubits. We additionally derive an exact expression for arbitrary normalised initial states and Hermitian observables in Appendix~\ref{app:general_observables}.

Increasing the number of layers yields an arbitrarily high polynomial separation (in the per-layer term count $k$) between the simulation cost $O(k^{2l}n^3)$ of the best classical method known to us, and the coherent gate count $O(lkn^2)$, while the variance lower bound decreases at the corresponding rate $k^{-3l}$. Note, however, that while the postselection probability is accounted for in our variance analysis, it needs to be controlled in practical applications. Further work would be necessary to realise a practical separation from classical methods, which must contend with postselection probabilities, possible classical algorithms that exploit the stacked structure, gate execution times on real quantum devices, and sources of noise such as decoherence and imprecise readout.

\bibliography{references}

\onecolumngrid
\appendix

\section{Reading the Diagrams}\label{app:diagrams}
All diagrams in the paper are tensor networks, with minimal additional decoration to represent sums. This follows a long tradition of diagrammatic representations in quantum computing, finding use in tensor networks and categorical quantum mechanics~\cite{penrose1971applications,abramsky2004categorical,coecke2017picturing,biamonte2017tensor}.

Linear algebraic expressions can be represented using diagrams composed of wires and boxes. Wires carry finite-dimensional Hilbert spaces, while boxes represent linear maps. For example,
\begin{align}
    A: \mathbb{C}^n \rightarrow \mathbb{C}^m
    \quad\longleftrightarrow\quad \input{\tikzitpathtikzfigs/tutorial/A.tikz}.
\end{align}
When the system acted upon is unambiguous, the system size is omitted from the notation:
\begin{align}
    \input{\tikzitpathtikzfigs/tutorial/justA.tikz}.
\end{align}
Linear maps may be composed sequentially,
\begin{align}
    BA = \input{\tikzitpathtikzfigs/tutorial/AseqB.tikz},
\end{align}
or in parallel,
\begin{align}
    A \otimes B = \input{\tikzitpathtikzfigs/tutorial/AparB.tikz}.
\end{align}
The unnormalised maximally entangled vector is represented by a ``cup'':
\begin{align}
    \ket{\Phi} = \sum_{x\in\{0,1\}^n} \ket{x}\otimes\ket{x}
    = \input{\tikzitpathtikzfigs/tutorial/bell.tikz}.
\end{align}

The transpose of a linear map is represented by bending wires as
\begin{align}
    A^T = \input{\tikzitpathtikzfigs/tutorial/At.tikz}.
\end{align}
Complex conjugation and the adjoint are not represented by any change to the diagram: the box is drawn identically, with only its label decorated, as $A^*$ or $A^\dagger$.
The trace of a linear map is represented by
\begin{align}
    \tr(A) = \input{\tikzitpathtikzfigs/tutorial/trA.tikz}.
\end{align}
The vectorisation of a linear map is
\begin{align}
    |A\rangle\rangle := \left(A \otimes I \right)\ket{\mathbf{\Phi}} = \input{\tikzitpathtikzfigs/tutorial/vecA.tikz}.
\end{align}
We denote by $|A,B\rangle\rangle$ the paired vectorisation
\begin{align}
    |A,B\rangle\rangle := |A\rangle\rangle \otimes |B\rangle\rangle = \input{\tikzitpathtikzfigs/tutorial/vecAB.tikz}.
\end{align}

The Hilbert-Schmidt inner product of two linear maps is given by
\begin{align}
    \langle A, B\rangle_{HS} = \langle\langle A|B\rangle\rangle = \tr(A^\dagger B) = \input{\tikzitpathtikzfigs/tutorial/abhs.tikz}.
\end{align}

Tensor network notation does not natively offer a suitable standard for sums of linear maps, for which we introduce shaded boxes:
\begin{align}
    \sum_i A_i = \input{\tikzitpathtikzfigs/tutorial/sumA.tikz}.
\end{align}
Two separate shaded boxes in the same diagram indicate independent sums,
\begin{align}
    \sum_i A_i \otimes \sum_j B_j = \input{\tikzitpathtikzfigs/tutorial/aibi.tikz},
\end{align}
while a single box spanning multiple linear maps indicates a shared index for the sum,
\begin{align}
    \sum_i A_i \otimes B_i = \input{\tikzitpathtikzfigs/tutorial/abi.tikz}.
\end{align}

\section{Relation to the Weingarten Calculus}\label{app:weingarten}
Our method bears resemblance to the Weingarten calculus~\cite{weingarten1978asymptotic,collins2003moments,collins2006integration}, which also computes Haar averages from the Gram matrix of a group commutant~\cite{mele2023haar}. Our method differs in two respects. The Weingarten calculus builds its Gram matrix from a basis of the commutant, whereas our diagrammatic pairings give an overcomplete frame. It then inverts this Gram matrix, whereas we instead raise the Dirichlet-weighted Gram matrix $D_bG$ to a power, one factor for each layer. Since our frame is overcomplete, its Gram matrix is singular; this does not matter here because no inverse is required.

\section{Vanishing of Unbalanced Haar Moments}\label{app:vanishing}
\lemvanishing*
\begin{proof}
By left-invariance of the Haar measure, for any fixed $g \in \mathcal{G}$,
\begin{equation}
    E\!\left[U^{\otimes a} \otimes {U^*}^{\otimes b}\right] = E\!\left[(gU)^{\otimes a} \otimes {(gU)^*}^{\otimes b}\right].
\end{equation}
Taking $g = \omega I$ gives
\begin{align}
    E[U^{\otimes a} \otimes {U^*}^{\otimes b}] = \omega^{a-b} E[U^{\otimes a} \otimes {U^*}^{\otimes b}].
\end{align}
Since $\omega^{a-b} \neq 1$,
\begin{align}
    E[U^{\otimes a} \otimes {U^*}^{\otimes b}] = 0.
\end{align}
\end{proof}

\section{Free Fermion Definitions and Identities}\label{app:prelim}
In this section we introduce the definitions and identities required to prove our results involving fermionic Gaussian unitaries.

\subsection{Definitions}
A free fermion system on $n$ modes is described by creation and annihilation operators $a_j^\dagger, a_j$ ($j = 1, \dots, n$) obeying the canonical anticommutation relations (CAR),
\begin{align}
    \{a_j, a_k^\dagger\} = \delta_{jk}\,\mathbb{I}, \qquad \{a_j, a_k\} = 0,
\end{align}
where $\{A,B\} := AB + BA$ is the anticommutator. It is convenient to work instead with the $2n$ Hermitian \emph{Majorana operators}
\begin{align}
    c_{2j-1} &:= a_j + a_j^\dagger,
    & c_{2j} &:= i\,(a_j^\dagger-a_j),
\end{align}
which obey
\begin{align}
    \{ c_\mu, c_\nu\} = 2 \delta_{\mu,\nu}\,\mathbb{I}, \qquad c_\mu^2 = \mathbb{I},
\end{align}
so that each $c_\mu$ is its own inverse and distinct Majoranas anticommute. A free fermion system evolves under a quadratic Majorana Hamiltonian
\begin{align}
    H = \frac{i}{4}\sum_{\mu,\nu=1}^{2n} h_{\mu\nu}\,c_\mu c_\nu,
    \qquad h\in\mathbb R^{2n\times2n},\quad h^T=-h,
\end{align}
and the associated unitaries $e^{-iHt}$ realise $\operatorname{Spin}(2n)$ on Fock space, inducing rotations in $SO(2n)$ on the Majorana operators. Matchgate computation is historically connected to Valiant's perfect-matching identities~\cite{valiant2001quantum}.

A Majorana string $c^s$ denotes multiple sequentially composed Majorana operators, indexed by an ordered subset $s=\{\nu_1<\cdots<\nu_\kappa\}\subseteq[2n]$:
\begin{align}
    \kappa&:=|s|, & c^s&:=c_{\nu_1}c_{\nu_2}\cdots c_{\nu_\kappa}.
\end{align}
The parity operator $P$ is the full Majorana string $c_1c_2\cdots c_{2n}$, including the phase that makes it equal to $Z^{\otimes n}$:
\begin{align}
    P &:= Z^{\otimes n} = (-i)^{n} c_1c_2 \dots c_{2n}\\
    P^2 &= I.
\end{align}
\begin{align}
    \input{\tikzitpathtikzfigs/qk_0_def.tikz} &:= \mathcal{N}_\kappa \sum_{s \in \binom{[2n]}{\kappa}} c^s \otimes c^s
\end{align}
\begin{align}
    \input{\tikzitpathtikzfigs/qk_1.tikz} &:= i^{\kappa\bmod 2}\cdot\input{\tikzitpathtikzfigs/qk_1_def.tikz}
\end{align}
Equivalently, $Q_\kappa^1=i^{\kappa\bmod 2}(I\otimes P)Q_\kappa^0$.

\subsection{Identities}
We make use of the following general identities about majorana operators, the parity operator, and their interaction.
\begin{align}
    {c^s}^{\dagger}
    =(-1)^{\kappa(\kappa-1)/2}c^s
    =(-1)^{\lfloor\kappa/2\rfloor}c^s.
\end{align}
\begin{align}
    \tr(c^sc^s) = (-1)^{\lfloor \kappa / 2 \rfloor} \cdot \tr(c^s {c^s}^\dagger) = (-1)^{\lfloor \kappa / 2 \rfloor} \cdot d
\end{align}
\begin{align}
    \tr(P c^{[2n]}) = (-i)^n \tr(c^{[2n]} c^{[2n]}) = (-i)^n \cdot (-1)^{n\cdot(2n-1)} \tr(c^{[2n]} {c^{[2n]}}^\dagger) = (-i)^n \cdot (-1)^{n\cdot(2n-1)} \cdot d
\end{align}
\begin{align}
    \tr(Q_{\kappa}^0) = \delta_{0, \kappa} \cdot \mathcal{N}_\kappa \cdot \tr(I_d \otimes I_d) = \delta_{0, \kappa} \cdot\frac{1}{d} \cdot d^2 = \delta_{0, \kappa} \cdot d
\end{align}
\begin{align}
    \input{\tikzitpathtikzfigs/twist_ii_qk0_cs_tr.tikz} = \tr(c^s c^s) = (-1)^{\kappa(\kappa-1)/2} \tr(c^s {c^s}^\dagger) = (-1)^{\kappa(\kappa-1)/2} \cdot d
\end{align}
\begin{align}
    Pc^s = (-1)^{\kappa} \cdot c^s P
\end{align}

\section{Gram Matrix Elements for the Fermionic Gaussian Group}\label{app:grammatrix}

\subsection{First-Order Terms}
We evaluate gram matrix elements associated with the first-order commutant of the fermionic Gaussian group. All terms here are in $\{0,1,\frac{1}{d}\}$.

$G_{3,3}$:
\begin{align}
    \frac{1}{d^2}\cdot\input{\tikzitpathtikzfigs/cccc_iiii.tikz} = \frac{1}{d^2}\tr(I)\tr(I) = \frac{1}{d^2}\cdot d^2 = 1
\end{align}
$G_{3,4}$:
\begin{align}
    \frac{1}{d^2}\cdot\input{\tikzitpathtikzfigs/cccc_iipi.tikz} = \frac{1}{d^2}\tr(P)\tr(I) = 0
\end{align}
$G_{3,5}$:
\begin{align}
    \frac{1}{d^2}\cdot\input{\tikzitpathtikzfigs/cccc_iiip.tikz} = \frac{1}{d^2}\tr(I)\tr(P) = 0
\end{align}
$G_{3,6}$:
\begin{align}
    \frac{1}{d^2}\cdot\input{\tikzitpathtikzfigs/cccc_iipp.tikz} = \frac{1}{d^2}\tr(P)\tr(P) = 0
\end{align}

$G_{7,7}$:
\begin{align}
    \frac{1}{d^2}\cdot\input{\tikzitpathtikzfigs/dddd_iiii.tikz} = \frac{1}{d^2}\tr(I)\tr(I) = \frac{1}{d^2}\cdot d^2 = 1
\end{align}
$G_{7,8}$:
\begin{align}
    \frac{1}{d^2}\cdot\input{\tikzitpathtikzfigs/dddd_iipi.tikz} = \frac{1}{d^2}\tr(P)\tr(I) = 0
\end{align}
$G_{7,9}$:
\begin{align}
    \frac{1}{d^2}\cdot\input{\tikzitpathtikzfigs/dddd_iiip.tikz} = \frac{1}{d^2}\tr(I)\tr(P) = 0
\end{align}
$G_{7,10}$:
\begin{align}
    \frac{1}{d^2}\cdot\input{\tikzitpathtikzfigs/dddd_iipp.tikz} = \frac{1}{d^2}\tr(P)\tr(P) = 0
\end{align}

$G_{3,7}$:
\begin{align}
    \frac{1}{d^2}\cdot\input{\tikzitpathtikzfigs/ccdd_iiii.tikz} = \frac{1}{d^2}\tr(I) = \frac{1}{d^2}\cdot d = \frac{1}{d}
\end{align}
$G_{3,8}$:
\begin{align}
    \frac{1}{d^2}\cdot\input{\tikzitpathtikzfigs/ccdd_iipi.tikz} = \frac{1}{d^2}\tr(P) = 0
\end{align}
$G_{3,9}$:
\begin{align}
    \frac{1}{d^2}\cdot\input{\tikzitpathtikzfigs/ccdd_iiip.tikz} = \frac{1}{d^2}\tr(P) = 0
\end{align}
$G_{3,10}$:
\begin{align}
    \frac{1}{d^2}\cdot\input{\tikzitpathtikzfigs/ccdd_iipp.tikz} = \frac{1}{d^2}\tr(P^2) = \frac{1}{d^2}\cdot d = \frac{1}{d}
\end{align}

\subsection{Second-Order Terms}
Here we evaluate the inner products $G_{\alpha,\beta}$, for all terms involving at least one second-order commutant element. The indices follow the ordering of $\mathbf{P}$.

$G_{3,1}$:
\begin{align}
    \frac{1}{d} \cdot \input{\tikzitpathtikzfigs/cc_qk0.tikz} = \frac{1}{d}\cdot\tr(Q_{\kappa}^0) =  \delta_{0, \kappa} \cdot \frac{1}{d^2} \cdot d^2 = \delta_{0, \kappa}
\end{align}
$G_{3,2}$:
\begin{align}
    \frac{1}{d}\cdot \input{\tikzitpathtikzfigs/cc_qk1.tikz} = \frac{1}{d}\cdot\tr(Q_{\kappa}^1) = 0
\end{align}

$G_{4,1}$:
\begin{align}
    \frac{1}{d}\cdot\input{\tikzitpathtikzfigs/cpc_qk0.tikz} = \frac{1}{d}\cdot\input{\tikzitpathtikzfigs/cpc_qk0_tr.tikz} = \frac{\mathcal{N}_{\kappa}}{d} \sum_{c^s \in \binom{[2n]}{\kappa}} \input{\tikzitpathtikzfigs/cpc_qk0_cs.tikz} = 0
\end{align}

$G_{4,2}$:
\begin{align}
    \frac{1}{d}\cdot\input{\tikzitpathtikzfigs/cpc_qk1.tikz} = \frac{1}{d}\cdot\input{\tikzitpathtikzfigs/cpc_qk1_tr.tikz} = \frac{i^{\kappa\bmod 2}}{d}\cdot\input{\tikzitpathtikzfigs/cpc_qk1_pp.tikz} \\
     = \frac{i^{\kappa\bmod 2} }{d} \cdot \mathcal{N}_{\kappa} \cdot \delta_{2n, \kappa} \cdot \tr(P c^{[2n]})^2 = \delta_{2n, \kappa} \cdot (-1)^n
\end{align}

$G_{5,1}$:
\begin{align}
    \frac{1}{d}\cdot\input{\tikzitpathtikzfigs/ccp_qk0.tikz} = \frac{1}{d}\cdot\input{\tikzitpathtikzfigs/ccp_qk0_tr.tikz} = \frac{1}{d}\cdot\mathcal{N}_{\kappa} \sum_{c^s \in \binom{[2n]}{\kappa}} \input{\tikzitpathtikzfigs/ccp_qk0_cs_tr.tikz} = 0
\end{align}

$G_{5,2}$:
\begin{align}
    \frac{1}{d}\cdot\input{\tikzitpathtikzfigs/ccp_qk1.tikz} = \frac{1}{d}\cdot\input{\tikzitpathtikzfigs/ccp_qk1_tr.tikz} = \frac{i^{\kappa\bmod 2}}{d} \cdot \input{\tikzitpathtikzfigs/ccp_qk1_0tr.tikz} = \frac{i^{\kappa\bmod 2}}{d}\cdot \tr(Q_{\kappa}^0) = \delta_{0, \kappa}
\end{align}

$G_{6,1}$:
\begin{align}
    \frac{1}{d}\cdot \input{\tikzitpathtikzfigs/cpcp_qk0.tikz} = \frac{1}{d}\cdot\input{\tikzitpathtikzfigs/cpcp_qk0_tr.tikz} = \delta_{2n, \kappa} \frac{1}{d^2} \tr(P c^{[2n]})^2 = \delta_{2n, \kappa} \cdot (-1)^n
\end{align}

$G_{6,2}$:
\begin{align}
    \frac{1}{d}\cdot \input{\tikzitpathtikzfigs/cpcp_qk1.tikz} = \frac{1}{d}\cdot\input{\tikzitpathtikzfigs/cpcp_qk1_tr.tikz} = \frac{1}{d}\cdot i^{\kappa\bmod 2} \cdot \input{\tikzitpathtikzfigs/cpcp_qk1_0tr.tikz} = 0
\end{align}

$G_{7,1}$:
\begin{align}
    \frac{1}{d}\cdot\input{\tikzitpathtikzfigs/twist_ii_qk0.tikz} = \frac{1}{d}\cdot\mathcal{N}_{\kappa}\sum_{c^s \in \binom{[2n]}{\kappa}} \input{\tikzitpathtikzfigs/twist_ii_qk0_cs.tikz} = \frac{1}{d}\cdot\mathcal{N}_{\kappa}\sum_{c^s \in \binom{[2n]}{\kappa}} \input{\tikzitpathtikzfigs/twist_ii_qk0_cs_tr.tikz} = \frac{ (-1)^{\lfloor \kappa/2 \rfloor}}{d}\cdot \sqrt{\binom{2n}{\kappa}}
\end{align}

$G_{7,2}$:
\begin{align}
    \frac{1}{d}\cdot\input{\tikzitpathtikzfigs/twist_ii_qk1.tikz} = \frac{1}{d}\cdot\mathcal{N}_{\kappa} \cdot i^{\kappa\bmod 2} \sum_{c^s \in \binom{[2n]}{\kappa}} \input{\tikzitpathtikzfigs/twist_ii_qk1_cs.tikz} = \frac{1}{d}\cdot\mathcal{N}_{\kappa} \cdot i^{\kappa\bmod 2} \sum_{c^s \in \binom{[2n]}{\kappa}} \input{\tikzitpathtikzfigs/twist_ii_qk1_cs_tr.tikz} = 0
\end{align}

$G_{8,1}$:
\begin{align}
    \frac{1}{d}\cdot\input{\tikzitpathtikzfigs/twist_pi_qk0.tikz} = \frac{1}{d}\cdot\mathcal{N}_{\kappa} \sum_{c^s \in \binom{[2n]}{\kappa}} \input{\tikzitpathtikzfigs/twist_ii_qk1_cs_tr.tikz} = 0
\end{align}

$G_{8,2}$:
\begin{align}
    \frac{1}{d}\cdot\input{\tikzitpathtikzfigs/twist_pi_qk1.tikz}
    &= \frac{i^{\kappa\bmod 2}}{d}\cdot
       \input{\tikzitpathtikzfigs/twist_pi_qk1_0.tikz}\nonumber\\
    &= \frac{i^{\kappa\bmod 2}}{d}\mathcal{N}_{\kappa}
       \sum_{c^s \in \binom{[2n]}{\kappa}}
       \input{\tikzitpathtikzfigs/twist_ii_qk0_cs.tikz}\nonumber\\
    &= \frac{i^{\kappa\bmod 2}(-1)^{\lfloor \kappa/2 \rfloor}}{d}
       \sqrt{\binom{2n}{\kappa}}.
\end{align}

$G_{9,1}$:
\begin{align}
    \frac{1}{d} \cdot \input{\tikzitpathtikzfigs/twist_ip_qk0.tikz} = \frac{1}{d} \cdot \mathcal{N}_{\kappa} \sum_{c^s \in \binom{[2n]}{\kappa}} \input{\tikzitpathtikzfigs/twist_ii_qk1_cs_tr.tikz} = 0
\end{align}

$G_{9,2}$:
\begin{align}
    \frac{1}{d} \cdot\input{\tikzitpathtikzfigs/twist_ip_qk1.tikz} = \frac{i^{\kappa\bmod 2}}{d} \cdot \input{\tikzitpathtikzfigs/twist_ip_qk1_0.tikz} = \frac{i^{\kappa\bmod 2}}{d} \mathcal{N}_{\kappa} \sum_{c^s \in \binom{[2n]}{\kappa}} \input{\tikzitpathtikzfigs/tr_pcpc.tikz} \nonumber\\
    = (-1)^{\kappa\bmod 2} \cdot \frac{i^{\kappa\bmod 2}}{d} \cdot \mathcal{N}_{\kappa} \sum_{c^s \in \binom{[2n]}{\kappa}} \input{\tikzitpathtikzfigs/twist_ii_qk0_cs_tr.tikz} = (-1)^{\kappa (\kappa + 1)/ 2} \cdot \frac{i^{\kappa\bmod 2}}{d}  \sqrt{\binom{2n}{\kappa}}
\end{align}

$G_{10,1}$:
\begin{align}
    \frac{1}{d} \cdot \input{\tikzitpathtikzfigs/twist_pp_qk0.tikz} = \frac{1}{d} \cdot\mathcal{N}_{\kappa} \sum_{c^s \in \binom{[2n]}{\kappa}} \input{\tikzitpathtikzfigs/tr_pcpc.tikz}
    = \frac{(-1)^{\kappa}}{d} \cdot \mathcal{N}_{\kappa} \sum_{c^s \in \binom{[2n]}{\kappa}} \input{\tikzitpathtikzfigs/twist_ii_qk0_cs_tr.tikz} \nonumber\\
    = (-1)^{\kappa(\kappa + 1)/2} \cdot \frac{1}{d} \sqrt{\binom{2n}{\kappa}}
\end{align}

$G_{10,2}$:
\begin{align}
    \frac{1}{d} \cdot \input{\tikzitpathtikzfigs/twist_pp_qk1.tikz} = \frac{i^{\kappa\bmod 2}}{d} \cdot \input{\tikzitpathtikzfigs/twist_pp_qk1_0.tikz} = \frac{i^{\kappa\bmod 2}}{d} \input{\tikzitpathtikzfigs/twist_ip_qk0.tikz} = 0
\end{align}

$G_{1,1}$:
\begin{align}
    \input{\tikzitpathtikzfigs/second_order_00.tikz} = \tr(Q_{\kappa'}^0Q_{\kappa}^0) = \delta_{\kappa, \kappa'}
\end{align}

$G_{1,2}$:
\begin{align}
    \input{\tikzitpathtikzfigs/second_order_01.tikz} = \tr(Q_{\kappa'}^1Q_{\kappa}^0) = 0
\end{align}

$G_{2,2}$:
\begin{align}
    \input{\tikzitpathtikzfigs/second_order_11.tikz} = \tr(Q_{\kappa'}^1Q_{\kappa}^1) = \delta_{\kappa, \kappa'}
\end{align}

\section{Bounds for the Second Moment of the FF-S-LCU}\label{app:general_observables}
We derive the second moment of the FF-S-LCU for arbitrary $\rho_0,O$. From Sec.~\ref{subsec:second_moment}, the single-layer average $\Lambda := E[S_1] = \sum_i(D_b)_i\dk{\vv P_i}\db{\vv P_i}$ gives
\begin{align}\label{eq:g:setup}
    E[m_O^2]=\db{O,O}\Lambda^{\,l}\dk{\rho_0,\rho_0}.
\end{align}
In terms of the moment operators $M_{\pi_1},M_{\pi_2},M_{\pi_3}$ of Sec.~\ref{subsec:second_moment},
\begin{align}\label{eq:g:split}
    \Lambda=k\cdot E[a_i^4]\cdot M_{\pi_1}+k(k-1)\cdot E[a_i^2a_j^2]\cdot\big(M_{\pi_2}+M_{\pi_3}\big).
\end{align}
Each $M_{\pi_m}$ is an orthogonal projector, and the $\pi_2,\pi_3$ ranges are spanned by that of $\pi_1$. This may be observed by inspection of the Gram matrix in Sec.~\ref{subsec:gram-matrix} and Appendix~\ref{app:grammatrix}, and provides the following properties of the projectors,
\begin{align}\label{eq:g:rules}
    M_{\pi_m}^2=M_{\pi_m}\ \ (m=1,2,3),
    \qquad
    M_{\pi_1}M_{\pi_2}=M_{\pi_2},\quad M_{\pi_1}M_{\pi_3}=M_{\pi_3},
    \qquad
    \|M_{\pi_2}M_{\pi_3}\|=2/d .
\end{align}
We define the projector composed of the first-order commutants as $R := M_{\pi_2} + M_{\pi_3}$, which commutes with the second-order commutant projector as $M_{\pi_1}R = R M_{\pi_1}$.
Expanding $\Lambda^{l}$, these properties collapse the expansion to terms composed of $M_{\pi_1}$, $R$, and a term $\Delta$, exponentially small in the number of qubits.
\begin{align}\label{eq:g:power}
    \Lambda^{l}=(k\cdot E[a_i^4])^{l}\cdot M_{\pi_1} + c_l \cdot R + \Delta,
    \qquad
    \|\Delta\|\leq 4(l-1) \cdot c_l \frac{1}{d}\Big(1+\frac4d\Big)^{l-2} ,
\end{align}
where $c_l = \big(k\cdot E[a_i^4] + k(k-1) \cdot E[a_i^2a_j^2]\big)^{l} - (k\cdot E[a_i^4])^{l}$, and
\begin{align}\label{eq:g:delta-forms}
    \Delta =\sum_{r=2}^{l}\binom{l}{r}(k\,E[a_i^4])^{l-r}(k(k-1)E[a_i^2a_j^2])^{r}\big(R^{\,r}-R\big).
\end{align}
For the $\|\Delta\|$ bound, $R$ is a projector up to exponentially small error $O(1/d)$. $R^2=R+W$ with
$W:=M_{\pi_2}M_{\pi_3}+M_{\pi_3}M_{\pi_2}$, $\|W\|\leq2\|M_{\pi_2}M_{\pi_3}\|=4/d$ (adjoints have equal norm). Self-adjointness gives
$\|R\|^2=\|R^2\|\leq\|R\|+\|W\|$, so $\|R\|\leq1+4/d$. We may decompose the matrix term appearing in the expansion of $\Delta$ as
\begin{align}
    R^{\,r}-R=(I+R+\cdots+R^{r-2})W,
\end{align}
the operator norm of which is bounded by
\begin{align}
    \|R^{\,r}-R\|\leq\tfrac{4(r-1)}{d}(1+\tfrac4d)^{r-2}.
\end{align}
Using this bound in the sum in \eqref{eq:g:delta-forms} yields the bound of \eqref{eq:g:power}.

The remaining terms may be contracted with the boundary vectors, beginning with the second-order commutant projector,
\begin{align}\label{eq:g:c1}
    \db{O,O}M_{\pi_1}\dk{\rho_0,\rho_0}
      &=\sum_{j,\kappa}\tr\!\big(O^{\otimes2}Q^{\,j}_\kappa\big)\,
        \tr\!\big(\rho_0^{\otimes2}Q^{\,j}_\kappa\big)
      =\sum_{\kappa=0}^{2n}\Big[\Ptil_\kappa(O)\,\Ptil_\kappa(\rho_0)
        +\Ctil_\kappa(O)\,\Ctil_\kappa(\rho_0)\Big].
\end{align}
Here $\Ptil_\kappa=\tfrac{\sigma_\kappa}{\sqrt{D_\kappa}}\mathcal P_\kappa$ and
$\Ctil_\kappa=\tfrac{\sigma_\kappa}{\sqrt{D_\kappa}}\mathcal C_\kappa$ are scaled and signed versions of the $\kappa$-purity
and $\kappa$-coherence of Ref.~\cite{diaz2023showcasing}, with
$\sigma_\kappa=(-1)^{\lfloor\kappa/2\rfloor}$ and $D_\kappa=\binom{2n}{\kappa}$. The terms involving the first-order moment projectors may similarly be calculated as
\begin{align}\label{eq:g:c2}
    \db{O,O}\,M_{\pi_2}\,\dk{\rho_0,\rho_0}
      &=\frac{1}{d^{2}}\Big(\tr(O)\tr(\rho_0)+\tr(PO)\tr(P\rho_0)\Big)^{\!2},\\
    \db{O,O}\,M_{\pi_3}\,\dk{\rho_0,\rho_0}
      &=\frac{1}{d^{2}}\Big(\tr(O^2)\tr(\rho_0^2)+2\tr(PO^2)\tr(P\rho_0^2)
        +\tr(POPO)\tr(P\rho_0P\rho_0)\Big).
\end{align}
Combining these gives the second moment. The error term for the second moment, $\db{O,O}\Delta\dk{\rho_0,\rho_0}$, is at most $\|\Delta\|\,\tr(O^2)\tr(\rho_0^2)$ (using $\|\dk{X,X}\| = \tr(X^2)$); with
$\tr(\rho_0^2)\leq1$ and $c_l\leq1$, and $l \ll d$, this term is of order $O(l\,2^{-n}\tr(O^2))$.

\begin{center}\begin{minipage}{0.70\textwidth}
\begin{resultbox}[result:generalsecondmoment]{The FF-S-LCU second moment}
For any normalised initial state $\rho_0$ and any Hermitian observable $O$, the second
moment is
\begin{equation*}
\begin{aligned}
    E[m_O^2]\;=\;&(k\,E[a_i^4])^{l}\sum_{\kappa=0}^{2n}
      \Big[\Ptil_\kappa(O)\,\Ptil_\kappa(\rho_0)
        +\Ctil_\kappa(O)\,\Ctil_\kappa(\rho_0)\Big]\\[2pt]
    +\;&\frac{c_l}{d^{2}}\Big[\big(\tr(O)\tr(\rho_0)+\tr(PO)\tr(P\rho_0)\big)^{2}
      +\tr(O^2)\tr(\rho_0^2)\\
    &\hphantom{{}+\frac{c_l}{d^{2}}\Big[}
      +2\tr(PO^2)\tr(P\rho_0^2)+\tr(POPO)\tr(P\rho_0P\rho_0)\Big]\\[2pt]
    &+\;O\big(l\,2^{-n}\tr(O^2)\big),
\end{aligned}
\end{equation*}
where $c_l=(k\,E[a_i^4]+k(k-1)E[a_i^2a_j^2])^{l}-(k\,E[a_i^4])^{l}$.
\end{resultbox}
\end{minipage}\end{center}

\subsection{Concrete bounds on the second moment}
The result provided above is general for the FF-S-LCU, and makes very few assumptions beyond the norm of $\rho_0$. With minimal additional restrictions, we may provide a concrete lower bound for the variance of the expectation value. Let $\vv o_R,\vv s_R,G_R$ be the restrictions of $\vv o,\vv s,G$ to the eight first-order frame elements (the $\pi_2,\pi_3$ block).

\begin{theorem}\label{thm:factorise}
Let $\tr(O)=\tr(PO)=0$ ($O$ orthogonal to the first-order commutant, so that
$E[m]=0$ and $\Var[m]=E[m^2]$) and $\vv o_R,\vv s_R\geq0$. Then
\begin{align}\label{eq:g:factorise}
    \Var[m]\ \geq\ (k\,E[a_i^4])^{l}\,\db{O,O}M_{\pi_1}\dk{\rho_0,\rho_0}
    \ =\ (k\,E[a_i^4])^{l}\,\Var[m_U],
\end{align}
where $m_U := \tr \left( U\rho_0 U^\dagger O\right)$ is the expectation value for a single Fermionic Gaussian unitary.
\end{theorem}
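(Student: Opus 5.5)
We prove Theorem~\ref{thm:factorise} from the exact layer-power form $E[m^2]=\db{O,O}\Lambda^{l}\dk{\rho_0,\rho_0}$, with $\Lambda=\alpha M_{\pi_1}+\beta R$, where
\[
\alpha:=k\,E[a_i^4],\qquad \beta:=k(k-1)E[a_i^2a_j^2],\qquad R=M_{\pi_2}+M_{\pi_3}.
\]
Since $\tr(O)=\tr(PO)=0$, Result~\ref{result:firstmoment} gives $E[m]=0$, so $\Var[m]=E[m^2]$. The plan is to expand $\Lambda^l$ so that it equals $\alpha^l M_{\pi_1}$ plus terms containing at least one $R$, and then show those remaining terms contribute nonnegatively.

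First we expand the power. By \eqref{eq:g:rules}, $M_{\pi_1}$ is a projector with $M_{\pi_1}R=RM_{\pi_1}=R$. Hence any word in $M_{\pi_1}$ and $R$ containing $r\ge1$ copies of $R$ collapses to $R^{r}$. This gives the exact identity
\[
\Lambda^{l}=\alpha^{l}M_{\pi_1}+\sum_{r=1}^{l}\binom{l}{r}\alpha^{l-r}\beta^{r}R^{r}.
\]
Here we work with this identity directly rather than the approximate form \eqref{eq:g:power}, since we need the sign of every term and not just a norm bound. As $\alpha,\beta\ge0$, it suffices to show $\db{O,O}R^{r}\dk{\rho_0,\rho_0}\ge0$ for every $r\ge1$.

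Next we identify the leading term with $\Var[m_U]$. Take $k=l=1$, so that $E[a_1^4]=1$ and $\beta=0$. Then $\Var[m_U]=E[m_U^2]=\db{O,O}M_{\pi_1}\dk{\rho_0,\rho_0}$, because $E[m_U]=0$ for the same reason as above. This gives the claimed equality in \eqref{eq:g:factorise}.

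The main step is nonnegativity of the $R^r$ contractions, and this is where the hypothesis $\vv o_R,\vv s_R\ge0$ is used. Write $R=V_R^\dagger V_R$, where $V_R$ stacks the eight first-order frame vectors (the $\pi_2,\pi_3$ block). Then
\[
\db{O,O}R^{r}\dk{\rho_0,\rho_0}=\vv o_R^{\dagger}\,G_R^{\,r-1}\,\vv s_R,
\qquad \vv o_R=V_R\dk{O,O},\quad \vv s_R=V_R\dk{\rho_0,\rho_0},
\]
with $G_R=V_RV_R^\dagger$. From the Gram matrix of Sec.~\ref{subsec:gram-matrix}, the first-order block $G_R$ has only entries in $\{0,1,1/d\}$, so it is entrywise nonnegative, and so are all its powers. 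Combined with $\vv o_R,\vv s_R\ge0$ entrywise, each bilinear form is a sum of nonnegative products, hence nonnegative. Dropping all $r\ge1$ terms yields the inequality.

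The obstacle is checking that these vectors really are real and entrywise nonnegative, and that the frame-vector phases do not introduce signs. The entries $\vv s_R$ and $\vv o_R$ are the listed traces such as $\tr(\rho_0)^2$, $\tr(P\rho_0)^2$ and $\tr(P\rho_0P\rho_0)$. The hypothesis is stated precisely to cover them, so I expect this to reduce to careful bookkeeping against the displayed Gram block.
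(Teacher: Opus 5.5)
Your proposal is correct and follows the paper's proof essentially step for step. Like the paper, you use the exact binomial collapse of $\Lambda^{l}$ via $M_{\pi_1}R=RM_{\pi_1}=R$, the reduction $\db{O,O}R^{r}\dk{\rho_0,\rho_0}=\vv o_R^{\top}G_R^{\,r-1}\vv s_R$, and the entrywise nonnegativity of $G_R$ (entries in $\{0,\tfrac1d,1\}$) together with the hypothesis to drop every $r\geq1$ term; your $k=l=1$ specialisation also neatly justifies the equality with $\Var[m_U]$, which the paper states without comment.
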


\begin{proof}
By \eqref{eq:g:rules}, $\Lambda^{l}=(k\,E[a_i^4])^{l}M_{\pi_1}+\sum_{r=1}^{l}\binom{l}{r}(k\,E[a_i^4])^{l-r}(k(k-1)E[a_i^2a_j^2])^{r}R^{\,r}$. Considering only the first-order term,
\begin{align}\label{eq:g:z1walk}
    \db{O,O}R^{\,r}\dk{\rho_0,\rho_0}=\vv o_R^\top\,G_R^{\,r-1}\,\vv s_R .
\end{align}
Because each $R^{\,r}$ acts only within the first-order block, this walk involves only
$\vv o_R,\vv s_R$ and $G_R$; the second-order overlaps enter solely through the retained
$M_{\pi_1}$ term. The entries of $G_R$ are traces of products of $I$ and $P$
(Appendix~\ref{app:grammatrix}), hence lie in $\{0,\tfrac1d,1\}$, and $\vv o_R,\vv s_R\geq0$
by hypothesis; so every $r\geq1$ term is non-negative and drops, giving
\eqref{eq:g:factorise}.
\end{proof}

The factors decouple: $(k\,E[a_i^4])^{l}\sim k^{-3l}$ carries the stacking, $\Var[m_U]$
the state and observable (closed form for all $\rho_0,O$~\cite{diaz2023showcasing}); any
plateau is inherited from the twirl, not created by stacking. The hypotheses hold for
even-parity $\rho_0$ and parity-preserving $O$ with $\tr(PO^2)\geq0$.

\thmbound*

\begin{proof}
A traceless quadratic $O$ is a Majorana bilinear: it commutes with $P$ and carries all
weight at degree $\kappa=2$. So $\tr(O)=\tr(PO)=0$ (the $cc$ entries of $\vv o_R$ vanish),
and via $POP=O$,
\begin{align}\label{eq:g:kappasum}
    \tr(POPO)=\tr(O^2)=d,
    \qquad
    \tr(PO^2)=0\quad(n\geq3),
\end{align}
the latter since $O^2$ has degree $\leq4<2n$. Hence $\vv o_R=(0,0,0,0,1,0,0,1)$ and
$\vv s_R=\tfrac1d(1,\dots,1)$ ($\tr(\rho_0^m)=1$), both $\geq0$, so
Theorem~\ref{thm:factorise} applies. Only $\kappa=2$ survives in $\Var[m_U]$, which
for $\tr(O^2)=d$ equals $1/(2n-1)$~\cite{diaz2023showcasing}. Substituting into
\eqref{eq:g:factorise} gives $\Var[m]\geq(k\,E[a_i^4])^{l}/(2n-1)$; with
$k\,E[a_i^4]=24/((k{+}1)(k{+}2)(k{+}3))$ (Appendix~\ref{app:dirichlet}) this is the stated
bound. The case $O=Z_1$ holds also at $n=2$, as $Z_1^2=I$ keeps $\tr(PZ_1^2)=0$.
\end{proof}

\section{Dirichlet Coefficients}\label{app:dirichlet}
For $(a_1,\dots,a_k)\sim\mathrm{Dir}(1,\dots,1)$, the uniform Dirichlet distribution over the $(k-1)$-simplex, the following moments hold for $i\neq j$:
\begin{align}
    E[a_i] &= \frac{1}{k},\\
    E[a_i^2] &= \frac{2}{k (k+1)},\\
    E[a_i^4] &= \frac{24}{k(k+1)(k+2)(k+3)},\\
    E[a_ia_j] &= \frac{1}{k(k+1)},\\
    E[a_i^2a_j^2] &= \frac{4}{k(k+1)(k+2)(k+3)}.
\end{align}

\end{document}

%% file: tikzinfo.tex
\definecolor{regboxc}{HTML}{FFFFFF}
\definecolor{tallboxc}{HTML}{FFFFFF}

\definecolor{parityc}{HTML}{FFDC85}
\definecolor{majoranac}{HTML}{94D1BE}
\definecolor{densitymatrixc}{HTML}{b1d3ef}
\definecolor{obsc}{HTML}{E09891}

\definecolor{shadec}{HTML}{ffc4ec}

\tikzstyle{tall box}=[fill=tallboxc, draw=black, shape=rectangle, tikzit draw=black, tikzit shape=rectangle, tikzit fill=white, minimum height=1cm, minimum width=0.5cm, inner sep=0mm]
\tikzstyle{regular box}=[fill=regboxc, draw=black, shape=rectangle, tikzit draw=black, tikzit shape=rectangle, tikzit fill=white, minimum height=0.5cm, minimum width=0.5cm, inner sep=0mm, text width=0.5cm, font=\small, align=center]
\tikzstyle{parity}=[fill=parityc, draw=black, shape=circle, label={center:\small{$P$}}, minimum width=4mm]
\tikzstyle{majorana}=[fill=majoranac, draw=black, shape=circle, minimum width=4mm, inner sep=0mm, font={\small}]
\tikzstyle{densitymatrix}=[fill=densitymatrixc, draw=black, shape=rectangle, tikzit draw=black, tikzit shape=rectangle, minimum height=0.5cm, minimum width=0.5cm, inner sep=0mm]
\tikzstyle{obs}=[fill=obsc, draw=black, shape=circle, label={center:\small{$O$}}, minimum width=4mm]

\tikzstyle{gate}=[shape=rectangle, text height=1.5ex, text depth=0.25ex, yshift=0.5mm, fill=white, draw=black, minimum height=5mm, yshift=-0.5mm, minimum width=5mm, font={\small}]
\tikzstyle{big gate}=[shape=rectangle, text height=1.5ex, text depth=0.25ex, yshift=0.5mm, fill=white, draw=black, minimum height=10mm, yshift=-0.5mm, minimum width=5mm, font={\small}]
\tikzstyle{Z dot}=[inner sep=0mm, minimum size=2mm, shape=circle, draw=black, fill={rgb,255: red,221; green,255; blue,221}]
\tikzstyle{Z phase dot}=[minimum size=5mm, font={\footnotesize\boldmath}, shape=rectangle, rounded corners=2mm, inner sep=0.4mm, outer sep=-2mm, scale=0.8,  draw=black, fill={rgb,255: red,221; green,255; blue,221}]
\tikzstyle{X dot}=[Z dot, shape=circle, draw=black, fill={rgb,255: red,255; green,136; blue,136}]
\tikzstyle{X phase dot}=[Z phase dot, fill={rgb,255: red,255; green,136; blue,136}, font={\footnotesize\boldmath}]
\tikzstyle{hadamard}=[fill=yellow, draw=black, shape=rectangle, inner sep=0.6mm, minimum height=1.5mm, minimum width=1.5mm]
\tikzstyle{paulibox}=[fill={rgb,255: red,221; green,221; blue,255}, draw=black, shape=rectangle, inner sep=0.6mm, minimum height=5mm, minimum width=5mm, font={\footnotesize}, text height=1.5ex, text depth=0.25ex]
\tikzstyle{vertex}=[inner sep=0mm, minimum size=1mm, shape=circle, draw=black, fill=black]
\tikzstyle{vertex set}=[inner sep=0mm, minimum size=1mm, shape=circle, draw=black, fill=white, font={\footnotesize\boldmath}]
\tikzstyle{small black dot}=[fill=black, draw=black, shape=circle, inner sep=0pt, minimum width=1.2mm]
\tikzstyle{cnot ctrl}=[fill=black, draw=black, shape=circle, inner sep=0pt, minimum width=1.2mm]
\tikzstyle{cnot targ}=[fill=white, draw=white, shape=circle, label={center:$\oplus$}, inner sep=0pt, minimum width=2.1mm]
\tikzstyle{ket}=[fill=white, draw=black, shape=regular polygon, regular polygon sides=3, regular polygon rotate=-30, scale=0.7, inner sep=1pt]
\tikzstyle{bra}=[fill=white, draw=black, shape=regular polygon, regular polygon sides=3, regular polygon rotate=30, scale=0.7, inner sep=1pt]
\tikzstyle{scalar}=[shape=rectangle, text height=1.5ex, text depth=0.25ex, yshift=0.5mm, fill=white, draw=black, minimum height=5mm, yshift=-0.5mm, minimum width=5mm, font={\small}]
\tikzstyle{clabel}=[fill=white, draw=none, shape=rectangle, inner sep=1pt]
\tikzstyle{empty diagram}=[draw={gray!40!white}, dashed, shape=rectangle, minimum width=1cm, minimum height=1cm]
\tikzstyle{amap}=[fill=white, draw=black, shape=NEbox]
\tikzstyle{amap conj}=[fill=white, draw=black, shape=NWbox]
\tikzstyle{amap adj}=[fill=white, draw=black, shape=SEbox]
\tikzstyle{amap trans}=[fill=white, draw=black, shape=SWbox]
\tikzstyle{astate}=[fill=white, draw=black,line width=1.6pt, shape=NEtriangle]
\tikzstyle{astate conj}=[fill=white, draw=black, line width=1.6pt, shape=NWtriangle]
\tikzstyle{astate adj}=[fill=white, draw=black, line width=1.6pt, shape=SEtriangle]
\tikzstyle{astate trans}=[fill=white, draw=black, line width=1.6pt, shape=SWtriangle]

\tikzstyle{hadamard edge}=[-, dashed, dash pattern=on 2pt off 0.5pt, thick, draw={rgb,255: red,68; green,136; blue,255}]
\tikzstyle{box edge}=[-, dashed, dash pattern=on 2pt off 0.5pt, thick, draw={rgb,255: red,203; green,192; blue,225}]
\tikzstyle{gray dashed edge}=[-, dashed, dash pattern=on 2pt off 0.5pt, thick, {gray!60!white}]
\tikzstyle{brace edge}=[-,  decorate, decoration={brace,amplitude=1mm,raise=-1mm}]
\tikzstyle{diredge}=[->]
\tikzstyle{double edge}=[-, double, shorten <=-1mm, shorten >=-1mm, double distance=2pt]
\tikzstyle{gray edge}=[-, {gray!60!white}]
\tikzstyle{blue edge}=[-, {blue!60!white}]
\tikzstyle{pointer edge}=[->, very thick, gray]
\tikzstyle{boldedge}=[-, line width=1.6pt, shorten <=-0.17mm, shorten >=-0.17mm]
\tikzstyle{bidir edge}=[<->, very thick, draw={rgb,255: red,191; green,191; blue,191}]
\tikzstyle{new edge style 0}=[-, fill=white]
\tikzstyle{paired boxes}=[-, draw=none, fill={rgb,255: red,164; green,219; blue,255}]

\pgfdeclarelayer{edgelayer}
\pgfdeclarelayer{nodelayer}
\pgfsetlayers{edgelayer,nodelayer,main}
\tikzstyle{none}=[inner sep=0pt]

%% file: tikzfigs/comm1_I.tikz
\begin{tikzpicture}
	\begin{pgfonlayer}{nodelayer}
		\node [style=none] (0) at (0, 0) {};
		\node [style=none] (1) at (-1, 0) {};
		\node [style=none] (2) at (1, 0) {};
	\end{pgfonlayer}
	\begin{pgfonlayer}{edgelayer}
		\draw (1.center) to (0.center);
		\draw (0.center) to (2.center);
	\end{pgfonlayer}
\end{tikzpicture}

%% file: tikzfigs/comm1_P.tikz
\begin{tikzpicture}
	\begin{pgfonlayer}{nodelayer}
		\node [style=parity] (0) at (0, 0) {};
		\node [style=none] (1) at (-1, 0) {};
		\node [style=none] (2) at (1, 0) {};
	\end{pgfonlayer}
	\begin{pgfonlayer}{edgelayer}
		\draw (1.center) to (0);
		\draw (0) to (2.center);
	\end{pgfonlayer}
\end{tikzpicture}

%% file: tikzfigs/tutorial/justA.tikz
\begin{tikzpicture}
	\begin{pgfonlayer}{nodelayer}
		\node [style=regular box] (0) at (0, 0) {$A$};
		\node [style=none] (1) at (-2, 0) {};
		\node [style=none] (2) at (2, 0) {};
	\end{pgfonlayer}
	\begin{pgfonlayer}{edgelayer}
		\draw (1.center) to (0);
		\draw (0) to (2.center);
	\end{pgfonlayer}
\end{tikzpicture}

%% file: tikzfigs/tutorial/AseqB.tikz
\begin{tikzpicture}
	\begin{pgfonlayer}{nodelayer}
		\node [style=regular box] (0) at (-0.75, 0) {$A$};
		\node [style=none] (1) at (-2, 0) {};
		\node [style=none] (2) at (2, 0) {};
		\node [style=regular box] (3) at (0.75, 0) {$B$};
	\end{pgfonlayer}
	\begin{pgfonlayer}{edgelayer}
		\draw (1.center) to (0);
		\draw (0) to (2.center);
	\end{pgfonlayer}
\end{tikzpicture}

%% file: tikzfigs/dddd_iiii.tikz
\begin{tikzpicture}
	\begin{pgfonlayer}{nodelayer}
	\end{pgfonlayer}
	\begin{pgfonlayer}{edgelayer}
		\draw (0,0.5)
			to [in=180, out=180, looseness=1.25] (0,-0.5)
			-- (3,-0.5)
			to [in=0, out=0, looseness=1.25] (3,0.5)
			-- cycle;
		\draw (0,1.5)
			to [in=180, out=180] (0,-1.5)
			-- (3,-1.5)
			to [in=0, out=0] (3,1.5)
			-- cycle;
	\end{pgfonlayer}
\end{tikzpicture}